\documentclass[12pt]{article}
\usepackage{graphicx}
\usepackage{natbib} 
\usepackage{url} 
\usepackage{comment}
\newcommand{\blind}{0}

\addtolength{\oddsidemargin}{-.5in}%
\addtolength{\evensidemargin}{-1in}%
\addtolength{\textwidth}{1in}%
\addtolength{\textheight}{1.7in}%
\addtolength{\topmargin}{-1in}%

\usepackage{comment}

\usepackage{caption}
\usepackage{subcaption}
\usepackage{graphicx}

\usepackage{times,scalefnt}
\usepackage{amsmath,amssymb,mathtools}
\usepackage{tikz}
\usetikzlibrary{matrix,decorations.pathreplacing, calc, positioning, shapes.callouts, shapes.geometric}

\usepackage{amsmath}
\DeclareMathOperator*{\argmax}{arg\,max}
\DeclareMathOperator*{\argmin}{arg\,min}

\usepackage[boxruled,vlined,linesnumbered,commentsnumbered]{algorithm2e}

\usepackage{subcaption}

\newcommand{\p}{{p}}

\newcommand{\Cb}{{\pmb{C}}}

\newcommand{\y}{{\mathbf{y}}}
\newcommand{\model}{{\pmb{\eta}}}
\newcommand{\epsilonv}{{\pmb{\varepsilon}}}
\newcommand{\thetav}{{\pmb{\theta}}}
\newcommand{\Sigmav}{{\pmb{\Sigma}}}

\newcommand{\Ib}{{\mathbf{I}}}

\newcommand{\Kv}{{\mathbf{K}}}
\newcommand{\kv}{{\mathbf{k}}}
\newcommand{\Wv}{{\mathbf{W}}}
\newcommand{\wb}{{\mathbf{w}}}
\newcommand{\Sv}{{\mathbf{S}}}
\newcommand{\Bv}{{\mathbf{B}}}
\newcommand{\bv}{{\mathbf{b}}}
\newcommand{\xv}{{\mathbf{x}}}
\newcommand{\tauv}{{\pmb{\tau}}}
\newcommand{\phiv}{{\pmb{\phi}}}
\newcommand{\muv}{{\pmb{\mu}}}
\newcommand{\rhov}{{\pmb{\zeta}}}

\newcommand{\tcb}{\textcolor{black}}

\usepackage[english]{babel}
\usepackage{amsthm}
\newtheorem{theorem}{Theorem}[section]

\newtheorem{lemma}[theorem]{Lemma}

\date{}

\begin{document}

\def\spacingset#1{\renewcommand{\baselinestretch}%
{#1}\small\normalsize} \spacingset{1}


\if0\blind
{
  \title{\bf Sequential Bayesian experimental design for calibration of expensive simulation models}
  \author{\"{O}zge S\"{u}rer\thanks{
    The authors gratefully acknowledge support from the NSF grant OAC 2004601}\hspace{.2cm}\\
    \small Department of Information Systems and Analytics, Miami University\\
    and \\
    Matthew Plumlee \thanks{The authors gratefully acknowledge support from the NSF grant  DMS 1953111} \\
    \small Department of Industrial Engineering and Management Sciences, Northwestern University \\
    and \\
    Stefan M. Wild \thanks{This material was based upon work supported by the U.S.\ Department of
	Energy, Office of Science, Office of Advanced Scientific Computing
	Research, applied mathematics and SciDAC programs under Contract No.\ \rm{DE-AC02-05CH11231}}. \\
    \small Applied Mathematics and Computational Research Division, Lawrence Berkeley National Laboratory}
  \maketitle
} \fi

\if1\blind
{
  \bigskip
  \bigskip
  \bigskip
  \begin{center}
    {\LARGE\bf Sequential Bayesian experimental design for calibration of expensive simulation models}
\end{center}
  \medskip
} \fi

\bigskip
\begin{abstract}
   Simulation models of critical systems often have parameters that need to be calibrated using observed data.
   For expensive simulation models, calibration is done using an emulator of the simulation model built on simulation output at different parameter settings. 
   Using intelligent and adaptive selection of parameters to build the emulator can drastically improve the efficiency of the calibration process.
   The article proposes a sequential framework with a novel criterion for parameter selection that targets learning the posterior density of the parameters. 
   The emergent behavior from this criterion is that exploration happens by selecting parameters in uncertain posterior regions while simultaneously exploitation happens by selecting parameters in regions of high posterior density.
   The advantages of the proposed method are illustrated using several simulation experiments and a nuclear physics reaction model.
\end{abstract}

\noindent%
{\it Keywords:} Acquisition, Active learning, Emulation, Gaussian process, Uncertainty quantification 
\vfill

\newpage
\spacingset{1.8} 

\section{Introduction}

Across many engineering and science disciplines, simulation models are used to understand the properties and behaviors of complex systems.
These simulation models often include uncertain or unknown input parameters that affect the mechanics of the simulation. Calibration, the topic of this article, leverages observed data from the system to infer parameters and create predictions of quantities of interest. 
Bayesian calibration is a particular brand of calibration that provides uncertainty quantification on parameters and predictions. 
Bayesian calibration involves constructing a posterior distribution using  knowledge encapsulated in a prior distribution, which is typically a known, closed-form function of the parameters, and a likelihood, which requires the evaluation of the simulation model. 
The posterior distribution then reflects the relative probability of a set of parameters given both prior knowledge and the adherence to the observed data. 
In a standard Bayesian calibration process, Markov chain Monte Carlo (MCMC) computational techniques \citep{Gelman2004} are used to generate samples from the posterior using the unnormalized posterior density which represents the posterior density up to some unknown scaling constant. 
MCMC requires many evaluations of the likelihood for convergence of the posterior distribution.

Bayesian calibration using direct MCMC becomes difficult when a simulation model is computationally expensive. 
For example, a single simulation evaluation can take hours or more in fields such as epidemiology \citep{Yang2020} and nuclear physics \citep{Surer2022}. 
A popular solution is to build a computationally cheaper emulator to be used in place of the simulation model \citep{Ohagan2001, Higdon2004}. 
An emulator is built via flexible statistical models such as Gaussian processes (GPs) \citep{Rasmussen2005,gramacy2020surrogates} or Bayesian additive trees \citep{chipman2010bart} using simulation outputs recorded at a design consisting of a set of parameters. 
After an emulator is constructed, it is used in place of the expensive simulation model throughout the parameter space. 

Space-filling designs such as Latin Hypercube Sampling (LHS) or Sobol sequences are widely used in the literature to generate designs for general-purpose emulator construction (see \cite{santner2018design} for a detailed survey). 
The downside of space-filling designs for calibration is that they sample the parameter space without regard to the agreement between the simulation output and the observed data. 
This means space-filling designs often query the simulation model in regions of the parameter space with ill-fitting simulation output. 
This problem is compounded in multidimensional parameter spaces where high posterior density regions can have very small volume relative to the support of the prior distribution. 
As an illustration, consider the unimodal density function illustrated in Figure~\ref{illustration1} with two parameters in $[-4, 4] \times [-4, 4]$; see Section~\ref{sec:results} for details. The left panel of Figure~\ref{illustration1} shows 50 samples \tcb{using} LHS where most of the design parameters are placed in the regions with near-zero posterior density. Many of these parameters provide little information for the behavior of the simulation model near the region of interest.
\begin{figure}[t]
    \centering
    \begin{subfigure}{0.45\textwidth}
        \includegraphics[width=1\textwidth]{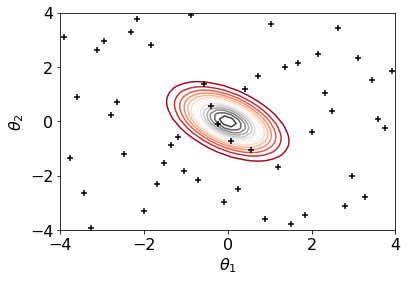}
    \end{subfigure}
    \begin{subfigure}{0.45\textwidth}
        \includegraphics[width=1\textwidth]{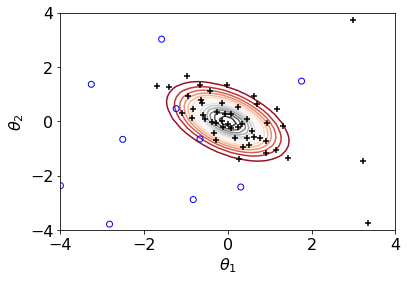}
    \end{subfigure}
    \caption{50 samples (plus markers) using LHS (left) and the sequential sampling procedure introduced in this article (right). Blue circles illustrate 10 samples used to initiate the proposed procedure. \tcb{The lines represent the level curves of the posterior.}}
    \label{illustration1}
\end{figure}

Sequential design (also called active learning) can be used to provide more precise inference by leveraging information learned to select the most informative parameters to sample. In sequential design, previous simulation evaluations are used to build a criterion that assesses the value of evaluating the simulation model at a new set of parameters. The criterion, also called an acquisition function, is then optimized to select a new set of parameters.
While previous research has shown that active learning paired with an emulator results in global high prediction accuracy \cite[Chapter 6]{gramacy2020surrogates}, there is less study on criteria in the calibration setting where global prediction is not needed.  
There is also a great deal of activity in Bayesian optimization \citep{Frazier2018}, where the goal is to find an optimizer (e.g., maximum a-posterior estimate), but this often results in designs that are hyper-localized and thus do not service calibration inference. 

This article proposes a novel sequential Bayesian experimental design procedure to accurately learn the unnormalized posterior density via an expected integrated variance (EIVAR) criterion. 
As a bit of foreshadowing, the right panel of Figure~\ref{illustration1} illustrates the results from our proposed approach, which generated 50 samples sequentially after 10 initial samples. 
Our approach places the majority of parameters in the high probability region, which facilitates learning of the posterior distribution. 
EIVAR measures the uncertainty in the estimate of the unnormalized posterior density, and the sequential strategy selects the parameters to minimize the overall uncertainty on the estimate. 
The resulting design allocates more points in the high probability regions than in the low probability regions, and hence the unnormalized posterior density function is  estimated more accurately with the proposed approach  than with space-filling approaches.
Moreover, we derive the proposed acquisition strategy considering both the high-dimensional simulation input and output since this is the case for most of the simulation models that require calibration.

Our EIVAR criterion is unique in how it leverages the natural emulation strategy used in modern calibration. The literature on sequential estimation of the posterior (or its analogs) often does not emulate the simulation model itself but instead emulate goodness-of-fit measures directly. 
For example, \cite{Joseph2015} and \cite{Joseph2019} propose an energy design criterion that aims to mimic a distribution while improving point diversity using an emulator over the distribution. Other examples of this perspective include \cite{Kandasamy2015}, who predict the (unnormalized) posterior, and \cite{Jarvenpa2019}, who instead predict the distance measure for approximate Bayesian computation \citep{Marin2012}. 
In the related topic of Bayesian quadrature methods \citep{Ohagan1999}, which aim to infer on the integral of a density function, there are methods that also use the emulation of the density itself (e.g., \cite{Fernandez2020}). 
Overall, these approaches seek to create good designs using the emulation of a goodness-of-fit measure itself such as the posterior. 
Building such emulators can be significantly more difficult compared to directly building emulators of the simulation model. 
There is untapped potential in directly using an emulator on the simulation model with its associated uncertainty quantification. 
One example of this perspective for the purposes of point estimation is \cite{Damblin2018}, who found success using an emulator for the simulation model to build an expected improvement criterion on the sum of the squared residuals.  
In contrast to that work, our goal is to infer on the parameters' posterior density to provide both point estimation and uncertainty quantification for these parameters. 
\tcb{Recently, \cite{Koermer2023} propose an integrated mean-squared prediction error criterion for improved field prediction within the Kennedy and O’Hagan calibration framework. 
In this paper, our focus is on the parameters rather than on the field prediction.}

The remainder of the paper is organized as follows. 
Section~\ref{sec:review} overviews the main steps of \tcb{the} sequential design procedure.
Section~\ref{sec:GPUQ} presents the resulting predictive mean and the variance of the unnormalized posterior density used to build our new acquisition strategy in Section~\ref{sec:acquisition}. 
We then extend the algorithm to the batched setting in Section~\ref{sec:batch} where several parameters are simultaneously evaluated in parallel to substantially reduce the total computation time. 
Finally, Section~\ref{sec:results} illustrates the computational and predictive advantages of the proposed approach using results from several simulation experiments including synthetic problems and \tcb{a} nuclear physics simulation model.

\section{Review and Overview}
\label{sec:review}

This section overviews the notation, Bayesian calibration methodology, and the proposed sequential algorithm.

\subsection{Simulation Structure and Notation}
\label{sec:notation}

Throughout this article, vectors and matrices are represented in boldface. \tcb{Our derivations consider a setting where simulation outputs are obtained at all design points once the simulation model is evaluated with parameters $\thetav$. To establish some notation, let $(\xv_1, \ldots, \xv_d)$ be the design points where the data $\y = (y(\xv_1), \ldots, y(\xv_d))$ is collected.
The simulation model, represented with $\model$, is a function that takes the parameters $\thetav = (\theta_1, \ldots, \theta_p)$ in a space $\Theta \subset \mathbb{R}^p$ as inputs and returns a $d$-dimensional output $\model(\thetav) = (\eta(\xv_1, \thetav), \ldots, \eta(\xv_d, \thetav))$ \tcb{at design points $(\xv_1, \ldots, \xv_d)$}.}
We consider the statistical model
\begin{equation}
    \y = \model(\thetav) + \epsilonv,   \label{eq:statmodel}
\end{equation}
where $\epsilonv \sim {\rm MVN}(\mathbf{0}, \Sigmav)$ denotes the residual error and we presume $\Sigmav$ is a known matrix. \tcb{While the precise form of the residual error model is debated (for example, if a discrepancy term is included; see the articles of \cite{Ohagan2001}, \cite{Bayarri2007}, \cite{Plumlee2017}, and \cite{gu2018scaled} for discussions on this in a Bayesian context), for now we consider a known $\Sigmav$ accounts for the uncertainty in the difference between the data and the model. We extend our EIVAR criterion for the case of unknown $\Sigmav$ with a discrepancy term in Section~\ref{sec:KOH}.}

In the Bayesian calibration framework, the model parameters are viewed as random variables, and $\p(\thetav|\y)$ indicates the posterior probability density of the parameters $\thetav$ given the observed data $\y$. 
The prior knowledge about the unknown parameters $\thetav$ is represented by the prior probability density $p(\thetav)$. 
Based on Bayes' rule, the posterior density has the form 
\begin{equation} 
\label{eq:posterior}
    p(\thetav|\y) = \frac{ p(\y|\thetav) p(\thetav)}{\int_{\Theta}  p(\y|\thetav') p(\thetav') {\rm d \thetav'}} \propto \tilde{p}(\thetav|\y) =  p(\y|\thetav) p(\thetav),
\end{equation}
where $p(\y|\thetav)$ represents the likelihood and $\tilde{p}(\thetav|\y)$ represents the unnormalized posterior.  
Using the statistical model in (\ref{eq:statmodel}),
\begin{equation}
   \p(\y|\thetav) = (2 \pi)^{-d/2} |\Sigmav|^{-1/2} \exp\bigg(-\frac{1}{2} (\y - \model(\thetav))^\top \Sigmav^{-1} (\y - \model(\thetav )) \bigg), \label{eq:truelike}
\end{equation}
where it becomes clear that evaluation of $\p(\y|\thetav)$ requires evaluating the computationally expensive simulation output $\model(\thetav)$ at any desired $\thetav$. 

Our goal is to study estimation of the unnormalized posterior density, $\tilde{p}(\thetav|\y)$, as a function of $\thetav$. 
The unnormalized posterior represents the functional form of the posterior and is used to feed MCMC schemes to get posterior samples.
Throughout the remainder of this paper, $\tilde{p}(\thetav|\y)$ is referred to as the posterior for brevity.

\subsection{Emulation and Calibration}
\label{sec:roleGP}

Emulators are used to accelerate Bayesian inference by predicting the simulation output at novel parameter combinations. 
A particularly powerful class of emulators has been GPs \citep{santner2018design,gramacy2020surrogates}. As noted in the introduction, emulation of a simulation model is often easier compared to emulation of the posterior directly. Typically this is because transferring the simulation model's output $\model(\cdot)$ to obtain the likelihood in Equation~\eqref{eq:truelike} adds more complexity to the inference problem.

To understand the advantage of emulation of the simulation output, consider the one-dimensional example illustrated in Figure~\ref{illustration2}. Note that we remove the boldface characters from a parameter $\theta$, a model's output $\eta(\theta)$, and observed data $y$ in this example since inputs and outputs are both scalars. 
The left panel of Figure~\ref{illustration2} shows an emulator built to predict the simulation output. 
The emulator can directly predict the likelihood as well and get a variance of that prediction (more details are provided in Section~\ref{sec:GPUQ}) as in the middle panel. 
In contrast, an emulator designed directly to predict the likelihood $p(y|\theta)$ is shown in the right panel.
Figure~\ref{illustration2} shows that the prediction coming from the emulator of the simulation output is better at representing the complex posterior surface.
\begin{figure}[t]
    \centering
    \begin{subfigure}{0.32\textwidth}
        \includegraphics[width=1\textwidth]{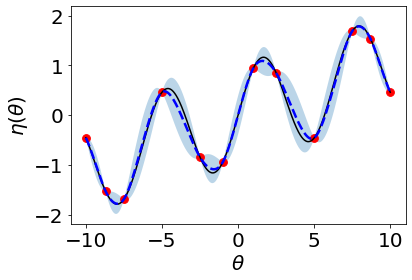}
    \end{subfigure}
    \begin{subfigure}{0.32\textwidth}
        \includegraphics[width=1\textwidth]{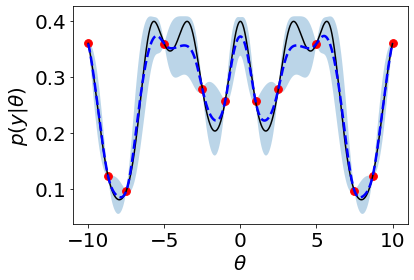}
    \end{subfigure}
    \begin{subfigure}{0.32\textwidth}
        \includegraphics[width=1\textwidth]{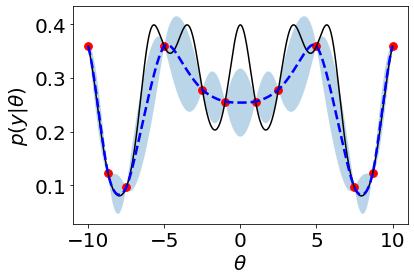}
    \end{subfigure}
    \caption{One-dimensional example of calibration with a single parameter $\theta$ and a simulation model $\eta(\theta) = \sin(\theta) + 0.1\theta$ and observation $y=0$ with observation variance one. Red dots indicate the simulation data used to build the emulators and the black line illustrates the simulation model or the likelihood. The blue dashed-line shows the prediction mean and the shaded area illustrates one predictive standard deviation from the mean. Shown are a GP emulator for $\eta(\cdot)$ (left panel); estimated $p(y|\cdot)$ using a GP emulator in left panel suggested in this article (middle panel); and a GP emulator directly for the likelihood $p(y|\cdot)$ (right panel).}
    \label{illustration2}
\end{figure}

\subsection{Overview of the Sequential Bayesian Experimental Design}
\label{sec:overview}

Our goal is to sequentially acquire simulation outputs from $n$ parameters to accurately estimate the posterior $\tilde{p}(\thetav|\y)$.
We first outline an approach where a single parameter and its simulation output are added to the existing simulation data at each stage indexed by $t$ as described in Algorithm~\ref{alg:oaat}.
This is extended to batched parameter collection in Section~\ref{sec:batch}.

\begin{algorithm}[H] \label{alg:oaat}

\spacingset{1} 
   \caption{Sequential Bayesian experimental design}
    
    \emph{Initialize} $\mathcal{D}_1 = \{(\thetav_{i}, \model(\thetav_{i})) : i = 1, \ldots, n_0\}$
    
    \For {$t = 1,\ldots,n$} {
        \emph{Fit} an emulator \tcb{$\model(\thetav)|\mathcal{D}_t$}
        
        \emph{Generate} candidate solutions $\mathcal{L}_t$
        
        \emph{Select} $\thetav^{\rm new} \in \argmin\limits_{\thetav^* \in \mathcal{L}_t} \mathcal{A}_t(\thetav^*)$

        \emph{Evaluate} $\model(\thetav^{\rm new})$

        \emph{Update} 
        $\mathcal{D}_{t+1} \gets \mathcal{D}_{t} \cup (\thetav^{\rm new}, \model(\thetav^{\rm new}))$

        }
\end{algorithm}

Sequential Bayesian experimental design has two main components: a Bayesian statistical model (in the form of a GP-based emulator on the simulation in this paper) and an acquisition function (which is often based on said model). 
Algorithm~\ref{alg:oaat} presents an overview of the process.
It begins by evaluating the simulation model at an initial experimental design of size $n_0$.
During each stage $t$, a new parameter is acquired for evaluation with the simulation model. 
The simulation data set $\mathcal{D}_{t+1} = \{(\thetav_{i}, \model(\thetav_{i})) : i = 1, \ldots, n_t\}$, where $n_t = n_0 + t$, includes all the simulation data obtained by the end of stage $t$. 
The simulation data, $\mathcal{D}_{t}$, is used to build the acquisition function $\mathcal{A}_t$.
At each stage $t$, the next parameter is chosen to minimize (approximately) the acquisition function $\mathcal{A}_t$.
The acquisition function is minimized over a candidate set of parameters $\mathcal{L}_t$ to avoid difficult numerical optimization of the proposed EIVAR acquisition function (see Section~\ref{sec:acquisition}). 
This new parameter and the simulation output is collected into $\mathcal{D}_{t+1}$ and the process repeats. Therefore, at each stage the most informative parameter is selected by leveraging information learned via simulation data $\mathcal{D}_{t}$.
   
\tcb{For the experiments presented in this paper, the sequential procedure is terminated once the simulation outputs are acquired from $n$ parameters. However, one can use alternative stopping criteria. One alternative is to decide on the best $n$ depending on the available computational resources (i.e., available computational hours, number of workers) especially if the simulation model is computationally expensive. If the computational budget allows or the simulation model is less expensive, one can create an hold-out data set (via some sampling techniques such as LHS), and then terminate the approach once a certain level of accuracy is achieved in predicting the posterior of the hold-out sample. Alternatively, one can obtain $k$-fold cross validation error using the simulation data $\mathcal{D}_t$, and the algorithm can be terminated if the decrease in the CV error stops improving.}

\subsection{Gaussian Process Model for the High-Dimensional Simulation Output}
\label{sec:GP}

A GP emulator is constructed at each stage of Algorithm~\ref{alg:oaat}. In this article, we show how this emulator can be used to construct an estimator for the posterior (see Section~\ref{sec:GPUQ}) and to build an acquisition function for the selection of a new parameter (see Section~\ref{sec:acquisition}). 
  
Since the emulation is over a $d$-dimensional output, where $d$ is the number of observations, we leverage a GP-based emulator that employs the basis vector approach that is now standard practice (e.g., \cite{Higdon2008}). \tcb{Similarly, for calibration, \cite{Huang2020} train separate independent emulators focused on each design point $\xv_i$ instead of building one big emulator for the entire $d$-dimensional output space. We note that one can replace the basis vector approach presented in this section with the approach presented by \cite{Huang2020}, and our results still hold with minor adjustments to the resulting statements and proofs. }
Let the $p \times n_t$ parameter matrix $\thetav_{1:n_t} = [\thetav_1, \ldots, \thetav_{n_t}]$ represent the parameters where the simulation model has been evaluated. 
The standardized outputs are stored in a $d \times n_t$ matrix $\Xi_t$ where the $i$th column is $(\model(\thetav_i) - \mathbf{h})/\mathbf{s}$ (computed elementwise), and $\mathbf{h}$ and $\mathbf{s}$ are the centering and scaling vectors for the simulation.
The $d \times q$ matrix $\Bv = [\bv_{1}, \ldots, \bv_{q}]$ stores the orthonormal basis vectors from principal component analysis of $\Xi_t$. Note that the dependence on $t$ is dropped from $\Bv$, $\mathbf{h}$, and $\mathbf{s}$ for simplicity.
Then, after transforming $\Xi_t$ with $\Wv_t =[\wb_{t,1}, \ldots, \wb_{t,q}] = \Bv^\top \Xi_t$ and letting $\mathbf{G} \coloneqq \text{diag}(\mathbf{s})$, an independent GP is used for each latent output $w_{j}(\cdot) = \bv_{j}^\top \mathbf{G}^{-1} \left(\model(\cdot) - \mathbf{h}\right)$ for $j = 1, \ldots, q$.
The GP model results in a prediction with mean $m_{t,j}(\cdot)$ and variance $\varsigma^2_{t,j}(\cdot)$ such that
\begin{equation}\label{eq:oned}
    w_{j}(\thetav)|\wb_{t,j} \sim \text{N}\left(m_{t,j}(\thetav), \varsigma^2_{t,j}(\thetav)\right), \qquad j = 1, \ldots, q.
\end{equation}
Following the properties of GPs, the mean $m_{t,j}(\thetav)$ and variance $\varsigma^2_{t,j}(\thetav)$ are given by
\begin{equation}
    \begin{aligned}
        & m_{t,j}(\thetav) = \kv_{j}(\thetav, \thetav_{1:n_t}) \Kv_{j}(\thetav_{1:n_t})^{-1} \wb_{t,j} \\
        & \varsigma^2_{t,j}(\thetav) = \kv_{j}(\thetav, \thetav) - \kv_{j}(\thetav, \thetav_{1:n_t}) \Kv_{j}(\thetav_{1:n_t})^{-1} \kv_{j}(\thetav_{1:n_t}, \thetav), \label{eq:meanvar_latent}
    \end{aligned}
\end{equation}
where
\begin{equation}
    \begin{aligned}
        k_j(\thetav, \thetav') &= \tau_{j}^{2} c(\thetav, \thetav'; \rhov_{j}),\\
        \Kv_{j}(\thetav_{1:n_t}) &= \kv_j(\thetav_{1:n_t}, \thetav_{1:n_t}') + \upsilon_{j}\Ib.
        \label{eq:covariance}
    \end{aligned}
\end{equation}
Here, $\tau_{j}^{2}$ represents a scaling parameter, $c(\thetav, \thetav'; \rhov_{j})$ is a correlation function, $\upsilon_{j} > 0$ is a nugget parameter, and $\rhov_{j} = (\zeta_{j, 1}, \ldots, \zeta_{j, p})$ is a lengthscale parameter. For the correlation function $c$, in this work we use the separable version of the Mat\'ern correlation function with smoothness parameter 1.5 \citep{Rasmussen2005} such that
\begin{equation}
    \begin{aligned}
        c(\thetav, \thetav'; \rhov_{j}) = & \prod_{l=1}^p\left[ (1 + |(\theta_l - \theta_l')\exp(\zeta_{j,l})|)  \exp\left(-\exp(\zeta_{j,l})|\theta_l - \theta_l'|\right)\right].
    \end{aligned}
\end{equation} 

We define the $q$-dimensional vector $\mathbf{m}_t(\thetav) = (m_{t,1}(\thetav), \ldots, m_{t,q}(\thetav))$ and $q \times q$ diagonal matrix $\Cb_t(\thetav)$ with diagonal elements $\varsigma^2_{t,j}(\thetav)$ for $j = 1, \ldots, q$. Then, due to the pseudo-inverse of the orthonormal $\Bv$, the predictive distribution on the emulator output is
\begin{equation}
    \model(\thetav)|\mathcal{D}_{t} \sim \text{MVN}\left(\muv_t(\thetav), \Sv_t(\thetav)\right),
    \label{emu_final}
\end{equation}
where $\muv_t(\thetav) \coloneqq  \mathbf{h} + \mathbf{G} \Bv \mathbf{m}_t(\thetav)$ is the emulator predictive mean and $\Sv_t(\thetav) \coloneqq \mathbf{G} \Bv\Cb_t(\thetav)\Bv^\top \mathbf{G}$ is the covariance matrix. 

\section{Posterior Inference}
\label{sec:GPUQ}

This section describes the expected value and variance of $\tilde{p}(\thetav|\y)$ at each stage $t$ using the emulator described in Section~\ref{sec:GP}. 
Given a parameter $\thetav$, the emulator returns a predictive mean $\muv_t(\thetav)$ and covariance matrix $\Sv_t(\thetav)$ based on data $\mathcal{D}_{t}$. 
The following result allows us to compute the expectation $\mathbb{E}[\tilde{p}(\thetav|\y) | \mathcal{D}_{t}]$ and the variance $\mathbb{V}[\tilde{p}(\thetav|\y) | \mathcal{D}_{t}]$ of the posterior given $\mathcal{D}_{t}$. 
Throughout the paper, $f_\mathcal{N}(\mathbf{a}; \, \mathbf{b}, \, \mathbf{C})$ denotes the probability density function of the normal distribution with mean $\mathbf{b}$ and covariance $\mathbf{C}$, evaluated at the value $\mathbf{a}$.  
    \begin{lemma}\label{lemma:UQ}
    Assuming that the covariance matrices $\Sigmav$ and $\Sv_t(\thetav)$ are positive definite, under the model given by Equations~\eqref{eq:posterior}, ~\eqref{eq:truelike}, and ~\eqref{emu_final}, 
        \begin{gather}
            \mathbb{E}[\tilde{p}(\thetav|\y)| \mathcal{D}_{t}] = f_\mathcal{N}\left(\y; \, \muv_t(\thetav), \, \Sigmav + \Sv_t(\thetav)\right) p(\thetav), \label{expectedpostfinal}\\
            \mathbb{V}[\tilde{p}(\thetav|\y) |\mathcal{D}_{t}] = \left(\frac{1}{2^{d}\pi^{d/2}|\Sigmav|^{1/2}}f_\mathcal{N}\left(\y; \, \muv_t(\thetav), \, \frac{1}{2}\Sigmav + \Sv_t(\thetav)\right) \right. \nonumber \\
            \left. \hspace{2in} - \left(f_\mathcal{N}\left(\y; \, \muv_t(\thetav), \, \Sigmav + \Sv_t(\thetav)\right)\right)^2\right)p(\thetav)^2. \label{variancepostfinal}
        \end{gather}
    \end{lemma}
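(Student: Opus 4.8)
The plan is to exploit the fact that, conditional on $\mathcal{D}_t$ and a fixed $\thetav$, the only source of randomness in $\tilde{p}(\thetav|\y)$ is the emulated output $\model(\thetav)$, since the prior $p(\thetav)$ is deterministic. First I would rewrite the likelihood in Equation~\eqref{eq:truelike} as a Gaussian density in $\y$, namely $p(\y|\thetav) = f_\mathcal{N}(\y; \, \model(\thetav), \, \Sigmav)$, so that $\tilde{p}(\thetav|\y) = f_\mathcal{N}(\y; \, \model(\thetav), \, \Sigmav)\, p(\thetav)$. I would then pull the factor $p(\thetav)$ (respectively $p(\thetav)^2$ in the variance case) outside the conditional moment. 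This reduces both claims to evaluating moments of $f_\mathcal{N}(\y; \, \model(\thetav), \, \Sigmav)$ with respect to the emulator law $\model(\thetav)\mid\mathcal{D}_t \sim \text{MVN}(\muv_t(\thetav), \Sv_t(\thetav))$ from Equation~\eqref{emu_final}.

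For the mean in Equation~\eqref{expectedpostfinal}, I would write $\mathbb{E}[f_\mathcal{N}(\y; \, \model(\thetav), \, \Sigmav)\mid\mathcal{D}_t]$ as the integral of $f_\mathcal{N}(\y; \, \mathbf{m}, \, \Sigmav)\, f_\mathcal{N}(\mathbf{m}; \, \muv_t(\thetav), \, \Sv_t(\thetav))$ over $\mathbf{m}$. The two key ingredients are the symmetry of the Gaussian kernel, $f_\mathcal{N}(\y; \, \mathbf{m}, \, \Sigmav) = f_\mathcal{N}(\mathbf{m}; \, \y, \, \Sigmav)$, which turns the integrand into a product of two densities in the same variable $\mathbf{m}$, and the standard Gaussian-product identity $\int f_\mathcal{N}(\mathbf{m}; \, \mathbf{a}, \, \mathbf{A})\, f_\mathcal{N}(\mathbf{m}; \, \mathbf{b}, \, \mathbf{B})\,\mathrm{d}\mathbf{m} = f_\mathcal{N}(\mathbf{a}; \, \mathbf{b}, \, \mathbf{A}+\mathbf{B})$. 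Setting $\mathbf{a}=\y$, $\mathbf{A}=\Sigmav$, $\mathbf{b}=\muv_t(\thetav)$, $\mathbf{B}=\Sv_t(\thetav)$ immediately produces $f_\mathcal{N}(\y; \, \muv_t(\thetav), \, \Sigmav+\Sv_t(\thetav))$, and positive-definiteness of $\Sigmav$ and $\Sv_t(\thetav)$ ensures the integral converges and $\Sigmav+\Sv_t(\thetav)$ is invertible.

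For the variance in Equation~\eqref{variancepostfinal}, I would use the decomposition $\mathbb{V}[\tilde{p}(\thetav|\y)\mid\mathcal{D}_t] = \mathbb{E}[\tilde{p}(\thetav|\y)^2\mid\mathcal{D}_t] - (\mathbb{E}[\tilde{p}(\thetav|\y)\mid\mathcal{D}_t])^2$, where the subtracted term is just the square of the mean already computed. For the second-moment term, the central step is the algebraic identity that the square of a Gaussian density is a scaled Gaussian density with halved covariance, $f_\mathcal{N}(\y; \, \mathbf{m}, \, \Sigmav)^2 = \frac{1}{2^{d}\pi^{d/2}|\Sigmav|^{1/2}}\, f_\mathcal{N}(\y; \, \mathbf{m}, \, \tfrac{1}{2}\Sigmav)$, obtained by matching the quadratic exponents and reconciling the normalizing constants. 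Taking the conditional expectation over $\model(\thetav)$ then reduces to exactly the same Gaussian-product integral as in the mean case but with $\Sigmav$ replaced by $\tfrac{1}{2}\Sigmav$, yielding $f_\mathcal{N}(\y; \, \muv_t(\thetav), \, \tfrac{1}{2}\Sigmav+\Sv_t(\thetav))$ times the constant. Reassembling the two pieces and restoring the $p(\thetav)^2$ factor gives the stated expression.

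I expect the main obstacle to be purely bookkeeping: correctly collapsing the product of normalizing constants when squaring the Gaussian density so as to recover the precise factor $\frac{1}{2^{d}\pi^{d/2}|\Sigmav|^{1/2}}$, since the determinant and power-of-$2\pi$ terms must be tracked carefully through the halving of the covariance. Everything else is a direct, repeated application of the two Gaussian identities, and the positive-definiteness hypotheses enter only to guarantee that each integral is finite and that the covariance sums $\Sigmav+\Sv_t(\thetav)$ and $\tfrac{1}{2}\Sigmav+\Sv_t(\thetav)$ are valid covariance matrices.
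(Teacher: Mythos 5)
Your proposal is correct, and it follows the same overall skeleton as the paper's proof: factor out $p(\thetav)$ (resp.\ $p(\thetav)^2$), reduce everything to moments of $p(\y|\thetav)$ under the emulator law $\model(\thetav)\mid\mathcal{D}_t \sim \mathrm{MVN}(\muv_t(\thetav),\Sv_t(\thetav))$, and obtain the variance via $\mathbb{E}[p(\y|\thetav)^2\mid\mathcal{D}_t]-\mathbb{E}[p(\y|\thetav)\mid\mathcal{D}_t]^2$. Where you genuinely diverge is in how the two central integrals are evaluated. The paper expands all densities into exponentials, changes variables to $\mathbf{v}=\muv_t-\model$, $\mathbf{z}=\y-\muv_t$, assembles the exponent into a block precision matrix for the joint vector $(\mathbf{v},\mathbf{z})$, identifies the corresponding $2d$-dimensional Gaussian covariance (checking the determinant matches the product of normalizing constants), and then marginalizes over $\mathbf{v}$ using the Gaussian identities of Rasmussen and Williams. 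You instead invoke two self-contained identities: the symmetry $f_\mathcal{N}(\y;\mathbf{m},\Sigmav)=f_\mathcal{N}(\mathbf{m};\y,\Sigmav)$ combined with the product-integral identity $\int f_\mathcal{N}(\mathbf{m};\mathbf{a},\mathbf{A})f_\mathcal{N}(\mathbf{m};\mathbf{b},\mathbf{B})\,\mathrm{d}\mathbf{m}=f_\mathcal{N}(\mathbf{a};\mathbf{b},\mathbf{A}+\mathbf{B})$, and the squared-density identity $f_\mathcal{N}(\y;\mathbf{m},\Sigmav)^2=\tfrac{1}{2^{d}\pi^{d/2}|\Sigmav|^{1/2}}f_\mathcal{N}(\y;\mathbf{m},\tfrac{1}{2}\Sigmav)$, which I have verified reproduces exactly the constant appearing in Equation~\eqref{variancepostfinal}. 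Your route buys brevity and transparency: the second moment becomes literally the same convolution integral as the first moment with $\Sigmav$ replaced by $\tfrac{1}{2}\Sigmav$, so there is no need to redo the block-matrix bookkeeping, and you never need $\Sv_t(\thetav)^{-1}$ explicitly (the paper's expansion does). What the paper's longer computation buys is self-containedness at the matrix level---every determinant and precision block is exhibited---which is the style choice they repeat again in the proof of Lemma~\ref{prop:IEV}; your identity-based shortcut would streamline that proof as well.
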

    \begin{proof}
        For the sake of brevity, we drop $\thetav$ from $\model(\thetav)$, $\muv_t(\thetav)$, and  $\Sv_t(\thetav)$.       
        By Equation~\eqref{eq:posterior}, since 
        \begin{equation*}
            \mathbb{E}[\tilde{p}(\thetav|\y)|\mathcal{D}_{t}] = \mathbb{E}[p(\y|\thetav) p(\thetav)| \mathcal{D}_{t} ] = \mathbb{E}[p(\y|\thetav)| \mathcal{D}_{t}]p(\thetav),
        \end{equation*}
        it suffices to show that
        \begin{align}\label{expectedlike}
            \mathbb{E}[p(\y|\thetav)|\mathcal{D}_{t}] = f_\mathcal{N}\left(\y; \, \muv_t, \, \Sigmav + \Sv_t\right).
        \end{align}
        Using Equations~\eqref{eq:truelike} and ~\eqref{emu_final}, 
        \begin{align}\notag
            \begin{split}
                \mathbb{E}[p(\y|\thetav)|\mathcal{D}_{t}] &= \int f_\mathcal{N}\left( \y; \, \model, \, \Sigmav \right) f_\mathcal{N}\left(\model; \, \muv_t, \, \Sv_t\right) d \model, 
            \end{split}
        \end{align}
        which can be equivalently written as 
        \begin{align}\label{jointlike_ext}
            \begin{split}
                (2\pi)^{-d}|\Sigmav|^{-1/2} |\Sv_t|^{-1/2} \int \exp\left\{-\frac{1}{2}(\y - \model)^\mathsf{T}\Sigmav^{-1}(\y - \model) -\frac{1}{2}(\model - \muv_t)^\mathsf{T}\Sv_t^{-1}(\model - \muv_t)\right\} d \model.
            \end{split}
        \end{align}
        Defining $\mathbf{v} = \muv_t - \model$ and $\mathbf{z} = \y - \muv_t$, Equation~\eqref{jointlike_ext} becomes 
        \begin{align}\label{jointlikesubs} 
            (2\pi)^{-d} |\Sigmav \Sv_t|^{-1/2} \int \exp\left\{-\frac{1}{2}(\mathbf{v} + \mathbf{z})^\mathsf{T}\Sigmav^{-1}(\mathbf{v} + \mathbf{z}) -\frac{1}{2}\mathbf{v}^\mathsf{T}\Sv_t^{-1}\mathbf{v}\right\} d \mathbf{v}.
        \end{align} 
        Writing Equation~\eqref{jointlikesubs} in matrix notation,
        \begin{align}\notag
            \begin{split}
                \mathbb{E}[p(\y|\thetav)|\mathcal{D}_{t}] &= (2\pi)^{-d} |\Sigmav \Sv_t|^{-1/2} \int \exp\left\{-\frac{1}{2}\left[{\begin{array}{c} \mathbf{v} \\
                \mathbf{z} \\\end{array}} \right]^\mathsf{T}\left[{\begin{array}{cc} \Sigmav^{-1} + \Sv_t^{-1} & \Sigmav^{-1} \\
                \Sigmav^{-1} & \Sigmav^{-1} \\
                \end{array} } \right] \left[{\begin{array}{c} \mathbf{v} \\
                \mathbf{z} \\ \end{array} } \right]\right\} d \mathbf{v} \\
                &= \int f_\mathcal{N}\left(\left[{\begin{array}{c} \mathbf{v} \\
                \mathbf{z} \\\end{array}} \right]; \,  \mathbf{0}, \, \left[ {\begin{array}{cc} \Sv_t & -\Sv_t \\
                                   -\Sv_t & \Sigmav + \Sv_t \\
                \end{array} } \right]  \right)d \mathbf{v},
            \end{split}
        \end{align}  
        which corresponds to marginalizing over $\mathbf{v}$. Equation~\eqref{expectedlike} then follows from Gaussian identities in the appendix of \cite[Equation~(A.6)]{Rasmussen2005}.
        
        We can use similar logic to compute the variance $\mathbb{V}[\tilde{p}(\thetav|\y)|\mathcal{D}_{t}]$ such that
        \begin{align}\label{variancepost} 
            \begin{split}
            \mathbb{V}[\tilde{p}(\thetav|\y)| \mathcal{D}_{t}] = \mathbb{V}[p(\y|\thetav)p(\thetav)| \mathcal{D}_{t}] &= \mathbb{V}[p(\y|\thetav)| \mathcal{D}_{t}]p(\thetav)^2 \\ &= \left(\mathbb{E}[p(\y|\thetav)^2| \mathcal{D}_{t}] - \mathbb{E}[p(\y|\thetav)| \mathcal{D}_{t}]^2\right)p(\thetav)^2.
            \end{split}
        \end{align}
        We first obtain
        \begin{align}\label{expectedsquared}
            \begin{split}
                & \mathbb{E}[p(\y|\thetav)^2| \mathcal{D}_{t}] = \int \left(f_\mathcal{N}\left(\y; \, \model, \, \Sigmav\right)\right)^2 f_\mathcal{N}\left(\model; \, \muv_t, \, \Sv_t\right) d \model \\
                &= \frac{1}{(2\pi)^{3d/2}|\Sigmav \Sv_t \Sigmav|^{1/2}} \int \exp\left\{-\frac{1}{2}\left(2(\y - \model)^\mathsf{T}\Sigmav^{-1}(\y - \model) + (\model - \muv_t)^\mathsf{T}\Sv_t^{-1}(\model - \muv_t)\right)\right\} d \model.
            \end{split}
        \end{align}
        Using again $\mathbf{v} = \muv_t - \model$ and $\mathbf{z} = \y - \muv_t$ and writing Equation~\eqref{expectedsquared} in matrix notation,
        \begin{align}\notag
            \begin{split}
                 &= \frac{1}{(2\pi)^{3d/2} |\Sigmav|^{1/2} 2^{d/2}\left|\frac{1}{2} \Sigmav \Sv_t\right|^{1/2} }\int \exp\left\{-\frac{1}{2}\left[{\begin{array}{c} \mathbf{v} \\
                \mathbf{z} \\\end{array}} \right]^\mathsf{T}\left[{\begin{array}{cc} 2\Sigmav^{-1} + \Sv_t^{-1} & 2\Sigmav^{-1} \\
                2\Sigmav^{-1} & 2\Sigmav^{-1}\\
                \end{array} } \right] \left[{\begin{array}{c} \mathbf{v} \\
                \mathbf{z} \\ \end{array} } \right]\right\} d \mathbf{v} \\
                &= \frac{1}{2^{d}\pi^{d/2} |\Sigmav|^{1/2}}\int f_\mathcal{N}\left(\left[{\begin{array}{c} \mathbf{v} \\
                \mathbf{z} \\\end{array}} \right]; \,  \mathbf{0}, \, \left[ {\begin{array}{cc} \Sv_t & -\Sv_t \\
                                   -\Sv_t & \frac{1}{2}\Sigmav + \Sv_t \\
                \end{array} } \right]  \right) d \mathbf{v}.
            \end{split}
        \end{align} 
        The last equality holds since the determinant of the covariance matrix $\left[ {\begin{array}{cc} \Sv_t & -\Sv_t \\ -\Sv_t & \frac{1}{2}\Sigmav + \Sv_t \\ \end{array} } \right]$ is $\left|\frac{1}{2}\Sigmav \Sv_t \right|$.
        The integral corresponds to marginalizing over $\mathbf{v}$, which proves that 
        \begin{align}\label{expectedsquaredderived}
            \mathbb{E}[p(\y|\thetav)^2| \mathcal{D}_{t}] = \frac{1}{2^{d}\pi^{d/2}|\Sigmav|^{1/2}}f_\mathcal{N}\left(\y; \, \muv_t, \, \frac{1}{2}\Sigmav + \Sv_t\right).
        \end{align} 
        The substitution of Equation~\eqref{expectedsquaredderived} into Equation~\eqref{variancepost} yields Equation~\eqref{variancepostfinal}.
    \end{proof}
The expected value in Equation~\eqref{expectedpostfinal} is the estimate of the posterior density and the variance in Equation~\eqref{variancepostfinal} measures the uncertainty on the posterior. 
At the extreme, note that as $\Sv_t(\thetav)$ gets small and $\muv_t(\thetav)$ gets close to $\model(\thetav)$, meaning the emulator becomes more accurate, the posterior prediction will approach the actual posterior and the variance will approach zero. 
Our goal is to have this happen with as few samples as possible using smart selection of parameters to evaluate.
The prediction and the variance can now be used to construct our EIVAR acquisition function described in the next section.

\section{Expected Integrated Variance for Calibration}
\label{sec:acquisition}
   
The acquisition function $\mathcal{A}_t(\thetav)$ is used to measure the value of evaluating the simulation model $\model(\thetav)$ at a set of parameters $\thetav$ given data $\mathcal{D}_t$. 
We suggest using an acquisition function of the aggregated variance of the posterior over the parameter space. 
The proposed acquisition is built using the mean and variance of posterior prediction from Lemma~\ref{lemma:UQ}.
The EIVAR criterion is inspired by other criteria that aggregate variance under different statistical models  \citep{Seo200,Binois2019,Jarvenpa2019,Sauer2022}, but this functional form is novel because we leverage a statistical emulation of the simulation output with a calibration objective.
Specifically, the EIVAR criterion is calculated by
\begin{align} \label{acquisitionfunc}
    \begin{split}
        \mathcal{A}_t(\thetav^*) &= \int_{\Theta} \mathbb{E}_{\model^* | \mathcal{D}_{t}} \left( \mathbb{V}[p(\y|\thetav) \left| (\thetav^*, \model^*) \cup \mathcal{D}_{t} \right] \right) p(\thetav)^2 d \thetav ,
    \end{split}
\end{align}
where $\model^* \coloneqq \model(\thetav^*)$ represents the new simulation output at $\thetav^*$.
Notice that the expectation is taken over the hypothetical simulation output $\model^*$, which is a random variable because $\model^*$ is unknown under data $\mathcal{D}_t$. 

To unpack Equation~\eqref{acquisitionfunc}, consider Figure~\ref{illustration3}, which illustrates the three consecutive stages of sequential design using the example presented in Figure~\ref{illustration2}. 
For the left panel, using $n_0 = 12$ points and the emulator from Figure~\ref{illustration2}, the EIVAR criterion is calculated.
Since the minima of EIVAR is approximately $-6.4$, the simulation model is evaluated at this point.
After the statistical emulator of the simulation is refitted with this evaluation included, the EIVAR criterion is computed again at the next stage. 
The EIVAR values around $-6.4$ are no longer optimal (those points have the largest EIVAR values) compared to other candidate points, which discourages the reselection of these points.
In general, exploration under this criterion happens as previously evaluated parameters and surrounding areas have high EIVAR values.

\begin{figure}[t]
    \centering
    \begin{subfigure}{0.32\textwidth}
        \includegraphics[width=1\textwidth]{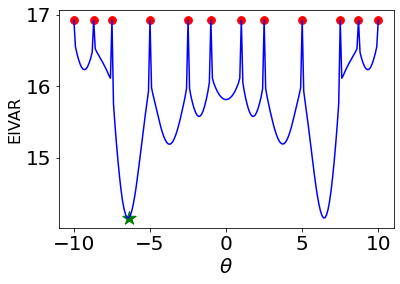}
        \label{stage1}
    \end{subfigure}
    \begin{subfigure}{0.32\textwidth}
        \includegraphics[width=1\textwidth]{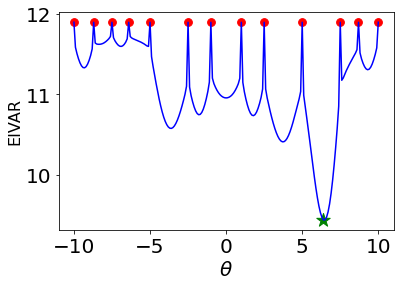}
        \label{stage2}
    \end{subfigure}
    \begin{subfigure}{0.32\textwidth}
        \includegraphics[width=1\textwidth]{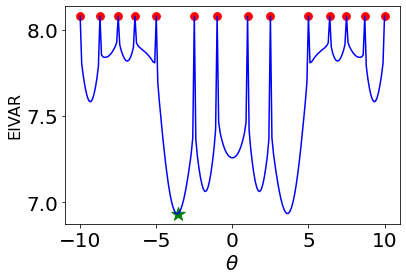}
        \label{stage3}
    \end{subfigure}
    \begin{subfigure}{0.32\textwidth}
        \includegraphics[width=1\textwidth]{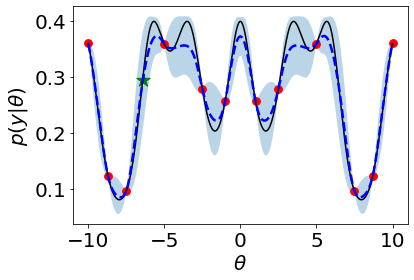}
        \label{stage1-like}
    \end{subfigure}
    \begin{subfigure}{0.32\textwidth}
        \includegraphics[width=1\textwidth]{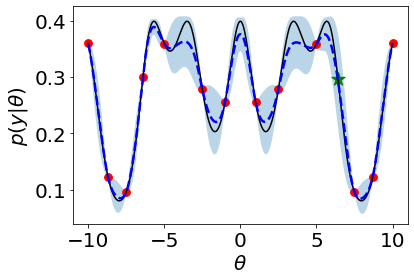}
        \label{stage2-like}
    \end{subfigure}
    \begin{subfigure}{0.32\textwidth}
        \includegraphics[width=1\textwidth]{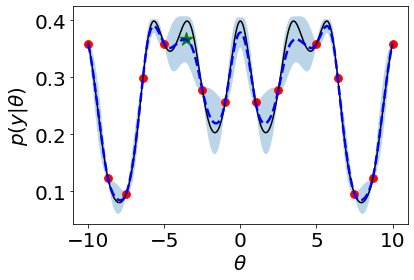}
        \label{stage3-like}
    \end{subfigure}
    \caption{One-dimensional example introduced in Figure~\ref{illustration2} to illustrate the EIVAR criterion in Equation~\eqref{acquisitionfunc} (first row). The blue line shows the EIVAR values across the range $\theta \in [-10, 10]$. Red dots illustrate the simulation data used to fit a GP emulator at each stage. The green star is the acquired point, which is added to the simulation data in the subsequent stage. Estimated and true $p(y|\theta)$ are illustrated in the second row.}
    \label{illustration3}
\end{figure}

To use Equation~\eqref{acquisitionfunc}, one needs to take the results from Lemma~\ref{lemma:UQ} and then integrate over all possible values of $\model^*$ given data $\mathcal{D}_t$.
To build closed-form estimates for this, we first establish general results for the GPs needed for the EIVAR criterion. 
Recall that $m_{t,j}(\thetav)$ and $\varsigma^2_{t,j}(\thetav)$ are the emulator mean and variance at stage $t$ for latent output $j = 1, \ldots, q$. 
After observing $\left(\thetav_{n_t+1},\model(\thetav_{n_t+1}) \right) = (\thetav^*,\model^*)$,
    \begin{align}\label{mean_future}
        \begin{split}
            m_{t+1,j}(\thetav) &= \left[\kv_j(\thetav, \thetav_{1:n_t}), \, k_j(\thetav, \thetav^*)\right] \left[ {\begin{array}{cc} \Kv_j(\thetav_{1:n_t}) & \kv_j(\thetav_{1:n_t}, \thetav^*) \\
            \kv_j(\thetav^*, \thetav_{1:n_t}) & k_j(\thetav^*, \thetav^*) + \upsilon_{j}\\
            \end{array} } \right]^{-1} \left[ {\begin{array}{cc} \wb_{t,j} \\
            w^*_j \\
                \end{array} } \right] \\
            &= m_{t,j}(\thetav) + \frac{\text{cov}_{t,j}(\thetav, \thetav^*)}{\varsigma^2_{t,j}(\thetav^*) + \upsilon_{j}} (w^*_j - m_{t,j}(\thetav^*)),
        \end{split} 
    \end{align}
where $\text{cov}_{t,j}(\thetav, \thetav^*) \coloneqq k_j(\thetav, \thetav^*) - \kv_j(\thetav, \thetav_{1:n_t}) \Kv_j^{-1}(\thetav_{1:n_t}) \kv_j(\thetav_{1:n_t}, \thetav^*)$, $w^*_j = \bv_{j}^\top \mathbf{G}^{-1} \left(\model^* - \mathbf{h}\right)$ is the value of the $j$th latent output for $\thetav^*$, \tcb{and $\upsilon_{j}$ is a nugget parameter introduced in Section~\ref{sec:GP}. }
Following similar logic for the variance function yields 
    \begin{align}\label{var_future}
        \begin{split}
            \varsigma^2_{t+1,j}(\thetav) &= \varsigma^2_{t,j}(\thetav) - \frac{\text{cov}_{t,j}(\thetav, \thetav^*)^2}{\varsigma^2_{t,j}(\thetav^*) + \upsilon_{j}}.
        \end{split} 
    \end{align}
The variance depends only on the chosen parameter $\thetav^*$, not the output $\model(\thetav^*)$. 
By taking the expected value and the variance of Equation~\eqref{mean_future},
    \begin{equation}
            \mathbb{E}_{\model^* | \mathcal{D}_{t}}\left[m_{t+1,j}(\thetav)\right] = m_{t,j}(\thetav) \qquad \mbox{and} \qquad
            \mathbb{V}_{\model^* | \mathcal{D}_{t}}\left[m_{t+1,j}(\thetav)\right] = \tau^2_{t,j}\left(\thetav, \thetav^*\right),
    \end{equation}
where $\tau^2_{t,j}(\thetav, \thetav^*) = \text{cov}_{t,j}(\thetav, \thetav^*)^2/(\varsigma^2_{t,j}(\thetav^*) + \upsilon_{j})$. Thus, since $w^*_j|\wb_{t,j} \sim \text{N}\left(m_{t,j}(\thetav^*), \, \varsigma^2_{t,j}(\thetav^*) + \upsilon_{j}\right)$ and by the transformation in Equation~\eqref{mean_future}, we conclude that 
    \begin{align}\label{mean_distr1}
        \begin{split}
            m_{t+1,j}(\thetav) | \mathcal{D}_t \sim \text{N}\left(m_{t,j}(\thetav), \, \tau^2_{t,j}(\thetav, \thetav^*)\right).
       \end{split} 
    \end{align}
    
For multidimensional simulation outputs, let $\tauv_t(\thetav, \thetav^*)$ be the $q \times q$ diagonal matrix with diagonal elements $\tau^2_{t,j}(\thetav, \thetav^*)$ and let $\phiv_{t}(\thetav, \thetav^*) \coloneqq \Bv\tauv_t(\thetav, \thetav^*)\Bv^\top$. 
Equation~\eqref{mean_distr1} implies 
    \begin{align}\label{mean_distr2}
        \begin{split}
            \muv_{t+1}(\thetav) | \mathcal{D}_t \sim \text{MVN}(\muv_{t}(\thetav), \phiv_{t}(\thetav, \thetav^*)).
       \end{split} 
    \end{align}  
In addition, Equation~\eqref{var_future} implies $\Cb_{t+1}(\thetav) = \Cb_t(\thetav) - \tauv_{t}(\thetav, \thetav^*)$ and $\Sv_{t+1}(\thetav) = \Sv_{t}(\thetav) - \phiv_{t}(\thetav, \thetav^*)$.
The next result is useful for computing the EIVAR criterion.

\begin{lemma}\label{prop:IEV}

Assuming that $ \phiv_{t}(\thetav, \thetav^*)$ is positive definite, under the conditions of Lemma~\ref{lemma:UQ}, 
    \begin{align} \label{eq:EIVcriterion}
        \begin{split}
            & \mathbb{E}_{\model^* | \mathcal{D}_{t}} \left( \mathbb{V}[p(\y|\thetav) \left| (\thetav^*, \model^*) \cup \mathcal{D}_{t} \right] \right) \\
            &= \frac{f_\mathcal{N}\left(\y; \, \muv_t(\thetav), \, \frac{1}{2}\Sigmav + \Sv_t(\thetav)\right)}{2^d \pi^{d/2} |\Sigmav|^{1/2}} - \frac{f_\mathcal{N}\left(\y; \, \muv_t(\thetav), \, \frac{1}{2}\left(\Sigmav + \Sv_t(\thetav) + \phiv_{t}(\thetav, \thetav^*)\right)\right)}{2^d \pi^{d/2} |\Sigmav + \Sv_t(\thetav) - \phiv_{t}(\thetav, \thetav^*)|^{1/2}}.
        \end{split}
    \end{align}
\end{lemma}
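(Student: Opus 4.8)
The plan is to recognize that the inner variance $\mathbb{V}[p(\y|\thetav) \,|\, (\thetav^*, \model^*) \cup \mathcal{D}_t]$ is exactly the quantity supplied by Lemma~\ref{lemma:UQ}, only evaluated at stage $t+1$. Applying that lemma with $\mathcal{D}_{t+1}$ in place of $\mathcal{D}_t$, and stripping off the $p(\thetav)^2$ factor since we only want the likelihood variance, gives
\begin{align*}
\mathbb{V}[p(\y|\thetav) \,|\, \mathcal{D}_{t+1}] = \frac{f_\mathcal{N}\!\left(\y; \muv_{t+1}(\thetav), \tfrac{1}{2}\Sigmav + \Sv_{t+1}(\thetav)\right)}{2^d \pi^{d/2}|\Sigmav|^{1/2}} - \left(f_\mathcal{N}\!\left(\y; \muv_{t+1}(\thetav), \Sigmav + \Sv_{t+1}(\thetav)\right)\right)^2 .
\end{align*}
The crucial observation is that, by Equation~\eqref{var_future}, the updated covariance $\Sv_{t+1}(\thetav) = \Sv_t(\thetav) - \phiv_t(\thetav, \thetav^*)$ is \emph{deterministic} given $\thetav^*$, whereas by Equation~\eqref{mean_distr2} the updated mean is random with $\muv_{t+1}(\thetav) \,|\, \mathcal{D}_t \sim \text{MVN}(\muv_t(\thetav), \phiv_t(\thetav, \thetav^*))$. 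Hence the expectation over the hypothetical output $\model^*$ collapses to an expectation over $\muv_{t+1}(\thetav)$, which I would distribute across the two terms.

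For the first term, I would invoke the symmetry of the Gaussian density in its mean and its argument, writing $f_\mathcal{N}(\y; \muv_{t+1}, \tfrac{1}{2}\Sigmav + \Sv_{t+1}) = f_\mathcal{N}(\muv_{t+1}; \y, \tfrac{1}{2}\Sigmav + \Sv_{t+1})$, and then apply the same Gaussian product/convolution identity from \cite[Equation~(A.6)]{Rasmussen2005} used in Lemma~\ref{lemma:UQ}, namely $\int f_\mathcal{N}(\mathbf{u}; \mathbf{a}, \mathbf{P}) f_\mathcal{N}(\mathbf{u}; \mathbf{b}, \mathbf{Q})\, d\mathbf{u} = f_\mathcal{N}(\mathbf{a}; \mathbf{b}, \mathbf{P} + \mathbf{Q})$. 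Taking the expectation against the $\text{MVN}(\muv_t, \phiv_t)$ law yields $f_\mathcal{N}(\y; \muv_t, \tfrac{1}{2}\Sigmav + \Sv_{t+1} + \phiv_t)$, and the algebraic collapse $\Sv_{t+1} + \phiv_t = \Sv_t$ reduces this to $f_\mathcal{N}(\y; \muv_t, \tfrac{1}{2}\Sigmav + \Sv_t)$, recovering the first term of Equation~\eqref{eq:EIVcriterion}.

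The second term is where the real work lies, since I must take the expectation of a \emph{squared} Gaussian density. The key step is to rewrite the square of a normal density as a scaled normal density with halved covariance: setting $\mathbf{A} \coloneqq \Sigmav + \Sv_{t+1}$, matching exponents and normalizing constants gives $\left(f_\mathcal{N}(\y; \muv_{t+1}, \mathbf{A})\right)^2 = (4\pi)^{-d/2}|\mathbf{A}|^{-1/2} f_\mathcal{N}(\y; \muv_{t+1}, \tfrac{1}{2}\mathbf{A})$. Applying Gaussian symmetry and the product identity once more, the expectation over $\muv_{t+1}$ produces $f_\mathcal{N}(\y; \muv_t, \tfrac{1}{2}\mathbf{A} + \phiv_t)$. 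Substituting $\mathbf{A} = \Sigmav + \Sv_t - \phiv_t$, the covariance rearranges to $\tfrac{1}{2}(\Sigmav + \Sv_t + \phiv_t)$, and the prefactor becomes $\left(2^d \pi^{d/2}|\Sigmav + \Sv_t - \phiv_t|^{1/2}\right)^{-1}$ after noting $(4\pi)^{d/2} = 2^d \pi^{d/2}$, reproducing the second term of Equation~\eqref{eq:EIVcriterion}.

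The main obstacle I anticipate is the bookkeeping in the second term: correctly extracting the $(4\pi)^{-d/2}|\mathbf{A}|^{-1/2}$ prefactor from the squared density and tracking how the determinant $|\mathbf{A}|^{1/2} = |\Sigmav + \Sv_t - \phiv_t|^{1/2}$ persists into the denominator while the surviving density's covariance rearranges to $\tfrac{1}{2}(\Sigmav + \Sv_t + \phiv_t)$. Both simplifications hinge on $\Sv_{t+1} + \phiv_t = \Sv_t$, so I would verify the positive-definiteness hypothesis on $\phiv_t(\thetav, \thetav^*)$ (and the consequent invertibility of $\Sigmav + \Sv_t - \phiv_t$) at the outset to guarantee that every density and convolution identity invoked is well defined.
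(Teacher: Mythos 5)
Your proof is correct, and its skeleton is the same as the paper's: apply Lemma~\ref{lemma:UQ} at stage $t+1$, observe that $\Sv_{t+1}(\thetav) = \Sv_t(\thetav) - \phiv_t(\thetav,\thetav^*)$ is deterministic given $\thetav^*$ while $\muv_{t+1}(\thetav)\,|\,\mathcal{D}_t \sim \mathrm{MVN}(\muv_t(\thetav), \phiv_t(\thetav,\thetav^*))$, and integrate the two terms of the variance against this Gaussian law. Where you genuinely depart from the paper is in how the resulting integrals are evaluated. The paper repeats the brute-force computation from Lemma~\ref{lemma:UQ}: it writes out both integrands as explicit exponentials (introducing $\mathbf{L} = \tfrac{1}{2}\Sigmav + \Sv_t - \phiv_t$, $\mathbf{M} = \Sigmav + \Sv_t - \phiv_t$, and constants $a_1, a_2$), recasts the quadratic forms as a block $2d \times 2d$ Gaussian in $(\mathbf{v},\mathbf{z})$, verifies the block determinants, and marginalizes over $\mathbf{v}$. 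You instead reduce everything to two standard identities: the convolution identity $\int f_\mathcal{N}(\mathbf{u}; \mathbf{a}, \mathbf{P}) f_\mathcal{N}(\mathbf{u}; \mathbf{b}, \mathbf{Q})\, d\mathbf{u} = f_\mathcal{N}(\mathbf{a}; \mathbf{b}, \mathbf{P}+\mathbf{Q})$ and the rescaling $\left(f_\mathcal{N}(\y; \muv, \mathbf{A})\right)^2 = (4\pi)^{-d/2}|\mathbf{A}|^{-1/2} f_\mathcal{N}\left(\y; \muv, \tfrac{1}{2}\mathbf{A}\right)$, which converts the troublesome squared density into a single Gaussian before integrating. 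Both of your identities check out (including the prefactor bookkeeping $(4\pi)^{d/2} = 2^d\pi^{d/2}$ and the collapse $\tfrac{1}{2}(\Sigmav + \Sv_t - \phiv_t) + \phiv_t = \tfrac{1}{2}(\Sigmav + \Sv_t + \phiv_t)$), so your route reaches exactly Equation~\eqref{eq:EIVcriterion}. What your approach buys is modularity and brevity: no block-matrix completion, no determinant verification, and the mechanism behind each term (a convolution of Gaussians) is transparent. What the paper's approach buys is self-containedness and uniformity: it reuses verbatim the machinery already exhibited in the proof of Lemma~\ref{lemma:UQ}, so a reader needs no identities beyond the single appendix reference to \cite{Rasmussen2005}, and the same template handles both the plain and the squared density without a separate rescaling step.
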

\begin{proof}
As in the previous results, we drop $\thetav$ from the notation and focus on
$$g^2_{t+1}(\thetav^*) =\mathbb{E}_{\model^* | \mathcal{D}_{t}} \left( \mathbb{V}[p(\y|\thetav) \left| (\thetav^*, \model^*) \cup \mathcal{D}_{t} \right] \right).$$
Let $\muv_{t+1}^*$ be the mean vector at $\thetav$ after seeing data $\mathcal{D}_{t+1}$ that includes $\left(\thetav_{n_t+1},\model(\thetav_{n_t+1}) \right) = (\thetav^*,\model^*)$. Lemma~\ref{lemma:UQ} shows $$\mathbb{V}[p(\y|\thetav) | \mathcal{D}_{t+1}] = \frac{1}{2^{d}\pi^{d/2}|\Sigmav|^{1/2}} f_\mathcal{N}\left(\y; \, \muv_{t+1}^*, \, \frac{1}{2}\Sigmav + \Sv_{t+1}\right) - \left(f_\mathcal{N}\left(\y; \, \muv_{t+1}^*, \, \Sigmav + \Sv_{t+1}\right)\right)^2.$$
Given that $\Sv_{t+1} = \Sv_t - \phiv_{t}(\thetav, \thetav^*)$ does not depend on $\model^*$, from Equation~\eqref{mean_distr2}
    \begin{align} \notag 
        \begin{split}
            g^2_{t+1}(\thetav, \thetav^*) &= \int \frac{1}{2^{d}\pi^{d/2}|\Sigmav|^{1/2}}f_\mathcal{N}\left(\y; \, \muv_{t+1}^*, \, \frac{1}{2}\Sigmav + \Sv_t - \phiv_t(\thetav, \thetav^*)\right) f_\mathcal{N}\left(\muv_{t+1}^*; \, \muv_{t}, \, \phiv_t(\thetav, \thetav^*)\right) d\muv_{t+1}^* \\& - \int \left(f_\mathcal{N}\left(\y; \, \muv_{t+1}^*, \, \Sigmav + \Sv_t - \phiv_t(\thetav, \thetav^*)\right)\right)^2 f_\mathcal{N}\left(\muv_{t+1}^*; \, \muv_{t}, \, \phiv_t(\thetav, \thetav^*)\right) d\muv_{t+1}^*.
        \end{split}
    \end{align}
Letting $\mathbf{L} = \frac{1}{2}\Sigmav + \Sv_t - \phiv_t(\thetav, \thetav^*)$, $\mathbf{M} =  \Sigmav + \Sv_t - \phiv_t(\thetav, \thetav^*)$, and $a_1 = \frac{2^{-d}\pi^{-d/2}|\Sigmav|^{-1/2}}{(2\pi)^d|\mathbf{L}\phiv_t(\thetav, \thetav^*)|^{1/2}}$, $a_2 = \frac{(2\pi)^{-3d/2}}{|\mathbf{M}\phiv_t(\thetav, \thetav^*)\mathbf{M}|^{1/2}}$, and assuming $\mathbf{L}$ and $\mathbf{M}$ are invertible, $g^2_{t+1}(\thetav, \thetav^*)$ is equivalently written as 
    \begin{align} 
        \begin{split} \label{eq:gnew}
        & a_1 \int \exp\left\{-\frac{1}{2} \left(\left(\y - \muv_{t+1}^*\right)^\top \mathbf{L}^{-1} \left(\y - \muv_{t+1}^*\right) + \left(\muv_{t+1}^* - \muv_{t}\right)^\top \phiv_t(\thetav, \thetav^*)^{-1} \left(\muv_{t+1}^* - \muv_{t}\right) \right)\right\}d\muv_{t+1}^* \\
        & - a_2 \int \exp\left\{-\frac{1}{2} \left(2\left(\y - \muv_{t+1}^*\right)^\top \mathbf{M}^{-1} \left(\y - \muv_{t+1}^*\right) + \left(\muv_{t+1}^* - \muv_{t}\right)^\top \phiv_t(\thetav, \thetav^*)^{-1} \left(\muv_{t+1}^* - \muv_{t}\right) \right)\right\}d\muv_{t+1}^*.
        \end{split}
    \end{align}
    Defining $\mathbf{v} = \muv_t - \muv_{t+1}^*$ and $\mathbf{z} = \y - \muv_t$, and writing Equation~\eqref{eq:gnew} in matrix notation
    \begin{align}\label{jointlikematrix} 
            \begin{split}
                =& \frac{2^{-d}\pi^{-d/2}|\Sigmav|^{-1/2}}{(2\pi)^d|\mathbf{L}\phiv_t(\thetav, \thetav^*)|^{1/2}} \int \exp\left\{-\frac{1}{2}\left[{\begin{array}{c} \mathbf{v} \\
                \mathbf{z} \\\end{array}} \right]^\mathsf{T}\left[{\begin{array}{cc} \mathbf{L}^{-1} + \phiv_t(\thetav, \thetav^*)^{-1} & \mathbf{L}^{-1} \\
                \mathbf{L}^{-1} & \mathbf{L}^{-1} \\
                \end{array} } \right] \left[{\begin{array}{c} \mathbf{v} \\
                \mathbf{z} \\ \end{array} } \right]\right\} d \mathbf{v} \\
                &- \frac{(2\pi)^{-3d/2}}{2^{d/2}|\mathbf{M}|^{1/2}\left|\frac{1}{2}\mathbf{M}\phiv_t(\thetav, \thetav^*)\right|^{1/2}} \int \exp\left\{-\frac{1}{2}\left[{\begin{array}{c} \mathbf{v} \\
                \mathbf{z} \\\end{array}} \right]^\mathsf{T}\left[{\begin{array}{cc} 2\mathbf{M}^{-1} + \phiv_t(\thetav, \thetav^*)^{-1} & 2\mathbf{M}^{-1} \\
                2\mathbf{M}^{-1} & 2\mathbf{M}^{-1} \\
                \end{array} } \right] \left[{\begin{array}{c} \mathbf{v} \\
                \mathbf{z} \\ \end{array} } \right]\right\} d \mathbf{v} \\
                =& \frac{1}{2^{d}\pi^{d/2}|\Sigmav|^{1/2}} \int f_\mathcal{N}\left(\left[{\begin{array}{c} \mathbf{v} \\
                \mathbf{z} \\\end{array}} \right]; \,  \mathbf{0}, \, \left[ {\begin{array}{cc} \phiv_t(\thetav, \thetav^*) & -\phiv_t(\thetav, \thetav^*)\\
                -\phiv_t(\thetav, \thetav^*) & \mathbf{L} + \phiv_t(\thetav, \thetav^*)\\
                \end{array} } \right]  \right)d \mathbf{v} \\
                &-\frac{1}{2^{d}\pi^{d/2}|\mathbf{M}|^{1/2}} \int f_\mathcal{N}\left(\left[{\begin{array}{c} \mathbf{v} \\
                \mathbf{z} \\\end{array}} \right]; \,  \mathbf{0}, \, \left[ {\begin{array}{cc} \phiv_t(\thetav, \thetav^*) & -\phiv_t(\thetav, \thetav^*)\\
                -\phiv_t(\thetav, \thetav^*) & \frac{1}{2}\mathbf{M} + \phiv_t(\thetav, \thetav^*)\\
                \end{array} } \right]  \right)d \mathbf{v}.
            \end{split}
        \end{align}
    The last equality holds since the determinants of the covariance matrices are $\left|\mathbf{L}\phiv_t(\thetav, \thetav^*)\right|$ and $\left|\frac{1}{2}\mathbf{M}\phiv_t(\thetav, \thetav^*)\right|$, respectively.
    Marginalizing over $\mathbf{v}$ and using the definition of $\mathbf{L}$ and $\mathbf{M}$ in Equation~\eqref{jointlikematrix}, we find that
    \begin{align} \notag 
        \begin{split}
            g^2_{t+1}(\thetav, \thetav^*) =& \frac{f_\mathcal{N}\left(\y; \, \muv_{t}, \, \frac{1}{2}\Sigmav + \Sv_t \right)}{2^{d}\pi^{d/2}|\Sigmav|^{1/2}} - \frac{f_\mathcal{N}\left(\y; \, \muv_{t}, \, \frac{1}{2}\left(\Sigmav + \Sv_t + \phiv_t(\thetav, \thetav^*)\right)\right)}{2^{d}\pi^{d/2}|\Sigmav + \Sv_t - \phiv_t(\thetav, \thetav^*)|^{1/2}} .
        \end{split}
    \end{align}
\end{proof}

At the $t$th stage, the suggestion is then to evaluate the simulation model at the parameter that minimizes Equation~\eqref{acquisitionfunc}.
In practice, the integral in Equation~\eqref{acquisitionfunc} is approximated with a sum over a uniformly distributed reference set $\Theta_{\rm ref}$ within the sample space, and thus the approximate EIVAR is given as
\begin{align} \label{approximateEIV}
    \begin{split}
        &  \frac{1}{|\Theta_{\rm ref}|} \sum_{\thetav \in \Theta_{\rm ref}} p(\thetav)^2 \left(\frac{f_\mathcal{N}\left(\y; \, \muv_{t}(\thetav), \, \frac{1}{2}\Sigmav + \Sv_t(\thetav)\right)}{2^{d}\pi^{d/2}|\Sigmav|^{1/2}} \right. \\
        & \hspace{1.5in} - \left. \frac{f_\mathcal{N}\left(\y; \, \muv_{t}(\thetav), \, \frac{1}{2}\left(\Sigmav + \Sv_t(\thetav) + \phiv_t(\thetav, \thetav^*)\right)\right)}{2^{d}\pi^{d/2}|\Sigmav + \Sv_t(\thetav) - \phiv_t(\thetav, \thetav^*)|^{1/2}}\right).
    \end{split}
\end{align}
The EIVAR criterion can be used to decide on the next point to be evaluated and it should be the minimizer from the candidate set $\mathcal{L}_t$. Since the first term in Equation~\eqref{approximateEIV} does not depend on $\thetav^*$, minimizing Equation~\eqref{approximateEIV} is equivalent to maximizing 
$$ \frac{1}{|\Theta_{\rm ref}|}\sum_{\thetav \in \Theta_{\rm ref}} p(\thetav)^2 \left( \frac{f_\mathcal{N}\left(\y; \, \muv_{t}(\thetav), \, \frac{1}{2}\left(\Sigmav + \Sv_t(\thetav) + \phiv_t(\thetav, \thetav^*)\right)\right)}{2^{d}\pi^{d/2}|\Sigmav + \Sv_t(\thetav) - \phiv_t(\thetav, \thetav^*)|^{1/2}}\right),$$
which is used to efficiently compute EIVAR in the experiments. \tcb{In this paper, the sequential design procedure with a closed-form EIVAR criterion is proposed to better estimate the parameters through estimating the posterior density. Additionally, since the posterior is well-estimated with our approach, one can use the emulator at the last stage of our sequential procedure (i.e., the emulator fitted with all acquired parameters) within the Bayesian calibration process to generate samples from the posterior of the parameters.}

In our examples and implementation, the matrices $\Sv_t(\thetav)$ and $\phiv_t(\thetav, \thetav^*)$ are rank at most $q$ due to projection with $\Bv$. 
If $q < d$, this is a violation of the positive definite conditions for Lemmas~\ref{lemma:UQ} and \ref{prop:IEV} which are adopted to streamline the proofs.
However, the final forms in the expressions for lemmas can be computed even without the positive definite condition on $\Sv_t(\thetav)$ and $\phiv_t(\thetav, \thetav^*)$  if $\Sigmav$ is positive definite.
In the case that $\Sigmav$ is not positive definite, we suggest adding a small positive definite matrix to  $\Sigmav$, which is similar to what is suggested by \cite{Higdon2008}.

\section{Additional Acquisition Functions}
\label{sec:additionalfuncs}

While Section \ref{sec:acquisition} describes our core proposed acquisition function, this section introduces alternative acquisition functions used in our empirical study in Section~\ref{sec:results}. 
Alternative acquisition functions are inspired from prior literature under different statistical models and represent reasonable alternatives.
One can consider them competitors from the literature, but they are morphed to fit our emulator model and calibration objective to obtain a fair benchmarking. 

The first alternative acquisition strategy is simply to choose the next point where the uncertainty of the posterior is highest. This is inspired from \cite{Seo200} who propose the selection of the parameter with the highest emulator variance for fitting a global emulator.
In the calibration context, \cite{Damblin2018} use the sum of squared errors of the residuals to measure the uncertainty on the simulation output prediction.
The analogous method called MAXVAR in the empirical study considers the uncertainty on the posterior prediction, and it replaces line~5 in Algorithm~\ref{alg:oaat} by
\begin{equation}\label{maxvar}
    \thetav^{\rm new} \in \argmax_{\thetav \in \mathcal{L}_t} \mathbb{V}[\tilde{p}(\thetav|\y)| \mathcal{D}_t].
\end{equation}
    
In addition to the posterior variance, one can utilize the expectation $\mathbb{E}[\tilde{p}(\thetav|\y)| \mathcal{D}_t]$ to build an acquisition function. 
In Bayesian quadrature scheme, \cite{Fernandez2020} propose using expectation with the diversity term $\min\limits_{\tilde{\thetav} \in \mathcal{D}_t} \|\thetav - \tilde{\thetav}\|_2$ to encourage exploration by fitting a GP for the posterior. In our proposed framework, the corresponding method abbreviated by MAXEXP uses
    \begin{equation}\label{maxexp}
        \thetav^{\rm new} \in \argmax_{\thetav \in \mathcal{L}_t} \mathbb{E}\left[\tilde{p}(\thetav|\y)| \mathcal{D}_t\right] \min\limits_{\tilde{\thetav} \in \mathcal{D}_t} \|\thetav - \tilde{\thetav}\|_2
    \end{equation}
in place of line~5 in Algorithm~\ref{alg:oaat}.

\tcb{In addition to the competitors introduced above, the two common acquisition functions are the expected improvement (EI) \citep{Jones1998} and the integrated mean squared error (IMSE) \citep{Sacks1989} in the active learning literature. EI is used to optimize a function in Bayesian optimization, whereas IMSE is used to build globally accurate emulators. In the calibration setting, one can consider maximizing the posterior $\tilde{p}(\thetav|\y)$ to find the maximum a-posterior estimate of the parameter via EI criterion. In such a case, EI tends to acquire parameters that are hyper-localized in the high posterior region and it provides minimal information about the entire posterior surface. In contrast, IMSE acquires many parameters at near-zero posterior region to better predict the entire simulation model and it does not bring any additional value for estimating the posterior. In Appendix~\ref{sec:ei_imse_eivar}, an illustration is provided to explain the details between EI, IMSE, and EIVAR.}

\section{EIVAR with Unknown Covariance}
\label{sec:KOH}

\tcb{The criterion introduced in Section~\ref{sec:acquisition} considers a known covariance matrix of errors between the data and the simulation model. This covariance matrix can incorporate observation or sensor noise, but many  modern analyses additionally account for the  uncertainty in the model itself in the form of a discrepancy.  A general representation is $\y = \model(\thetav) + \bv + \epsilonv,$
where $\bv = (b(\xv_1), \ldots, b(\xv_d))$ represents the discrepancy or the bias term at design points $(\xv_1, \ldots, \xv_d)$. One model, commonly attributed to \cite{Ohagan2001}, utilized a GP prior specification for both the simulation model $\model(\cdot)$ and the discrepancy function $b(\cdot)$.  
In our setting, this amounts to a multivariate normal distribution on $\bv$. The residual error $\epsilonv$ is often assumed sampled from a multivariate normal distribution with mean zero and covariance $\sigma^2_\varepsilon \mathbf{I}$. 
} 
\tcb{Let $\Sigmav^e$ denote the $d \times d$ covariance matrix of the discrepancy term $\bv$ plus the noise term $\epsilonv$.  Let $\thetav^e$ represent all ancillary parameters to define this matrix, which includes GP hyperparameters for the discrepancy term as well as $\sigma_\varepsilon^2$. 
In the case when there is no discrepancy between the simulation model and the observed data, one can consider that $\sigma^2_\varepsilon$ is $\thetav^e$ and set $\Sigmav^e = \sigma^2_\varepsilon \mathbf{I}$. }
\tcb{ 
As the ancillary parameters are unknown, and upon giving them the Bayesian treatment, we find joint posterior distribution is 
\begin{equation} \notag
    p(\thetav, \thetav^e|\y)  \propto  p(\y|\thetav, \thetav^e) p(\thetav)p(\thetav^e),
\end{equation}
where the likelihood, $p(\y|\thetav, \thetav^e)$, matches the right hand side of Equation~\eqref{eq:truelike} with $\Sigmav$ replaced by $\Sigmav^e$.} 

\tcb{Following modularization suggestions by \cite{Bayarri2007, Bayarri2009}, we propose to obtain plausible estimates of $\thetav^e$ at each stage $t$, and then to act as if these were fixed during acquisition. To do this, first, the simulation data is used to build an emulator for the simulation model as described in Section~\ref{sec:GP}. Then, the unknown ancillary parameters $\thetav^e$ are estimated by maximizing the likelihood, i.e.,
\begin{equation}
    \hat{\thetav}{}_{t}^{e}, \hat{\thetav}_t = \argmax_{\thetav^e \in \Theta^e, \thetav \in \Theta} |\Sigmav^e(\thetav^e) + \Sv_t(\thetav)|^{-1/2} \exp\bigg(-\frac{1}{2} (\y - \muv_t(\thetav))^\top \left(\Sigmav^e(\thetav^e) + \Sv_t(\thetav)\right)^{-1} (\y - \muv_t(\thetav)) \bigg),
    \label{eq:bias_like}
\end{equation}
where $\Theta^e$ represents the space of ancillary  parameters. 
These estimates are used to approximate the posterior distribution of the calibration parameters, $\tilde{p}(\thetav|\y)$, by $\tilde{p}(\thetav|\y, \hat{\thetav}{}_{t}^{e})$. 
Then, the EIVAR in Equation~\eqref{approximateEIV} can be computed by replacing $\Sigmav$ with the covariance estimate $\Sigmav^e (\hat{\thetav}{}_{t}^{e})$. 
Although our focus is not on the prediction of new, unseen field data, one can use the estimates of the ancillary parameters $\thetav^e$ and alongside the posterior distribution of $\thetav$ to predict new, unseen field data in the case of the unknown discrepancy. } 

\section{Batch-Sequential Bayesian Experimental Design}
\label{sec:batch}

Section~\ref{sec:notation} to this point focuses on a sequential strategy where only one simulation is evaluated at a time. 
This section extends those ideas to another algorithm that can run multiple expensive simulations in parallel in a batched setting to reduce the wall-clock time needed to estimate the posterior. 
This batched setting presumes access to a fixed set of workers to perform simulation evaluations in parallel and an additional worker to acquire new parameters at each stage $t$. 

A batch synchronous procedure acquires a batch of size $b$ parameters at each stage, and waits until $b$ simulation evaluations are completed to acquire the next batch of $b$ parameters. 
Batch synchronous Bayesian optimization methods have been widely used (see, e.g., \cite{Ginsbourger2010} and \cite{Azimi2010}).
Figure~\ref{fig:batchsynch} illustrates an example batch synchronous procedure with the number of workers to perform simulation evaluations \tcb{equal} to the batch size. In this example, once the simulation evaluations of the jobs with indices $\{1,2,3, 4\}$ and $\{5,6,7, 8\}$ are received from the workers, the acquisitions $\mathcal{A}_1$ and $\mathcal{A}_2$ are initiated on worker with index $0$, respectively (see the same color jobs and acquisitions in the left panel of Figure~\ref{fig:batchsynch}). Each of the acquisitions $\mathcal{A}_1$ and $\mathcal{A}_2$ generate four parameters, and their associated jobs $\{5, 6, 7, 8\}$ and $\{9, 10, 11, 12\}$ and simulation evaluations are allocated to the workers (see the same color jobs and acquisitions in the right panel of Figure~\ref{fig:batchsynch}). 

\begin{figure}
    \begin{subfigure}{0.5\textwidth}
        \begin{tikzpicture}[scale=1/2, transform shape, y=1.5cm,
            worker/.style={circle, minimum width=0.5cm, minimum height=1cm, anchor=north west, text width=0.5cm, align=right, draw},
            batch/.style={minimum height=1cm, anchor=north west, align=right, draw},
            ]
            \tikzstyle{every node}=[font=\Large]
            \node[worker] at (0.0,4.0) {0};
            \node[worker] at (0.0,3.0) {1};
            \node[worker] at (0.0,2.0) {2};
            \node[worker] at (0.0,1.0) {3};
            \node[worker] at (0.0,0.0) {4};
            \node[batch, fill={lightgray}, minimum width=1.583cm, text width=1.583cm] at (1.0,3.0) {$4$};
            \node[batch, fill={lightgray}, minimum width=2.051cm, text width=2.051cm] at (1.0,1.0) {$2$};
            \node[batch, fill={lightgray}, minimum width=2.441cm, text width=2.441cm] at (1.0,0.0) {$1$};
            \node[batch, fill={lightgray}, minimum width=3.789cm, text width=3.789cm] at (1.0,2.0) {$3$};
            \node[batch, fill={pink}, minimum width=1.669cm, text width=1.669cm] at (7.0,0.0) {$5$};
            \node[batch, fill={pink}, minimum width=1.944cm, text width=1.944cm] at (7.0,3.0) {$8$};
            \node[batch, fill={pink}, minimum width=2.437cm, text width=2.437cm] at (7.0,2.0) {$7$};
            \node[batch, fill={pink}, minimum width=2.5cm, text width=2.5cm] at (7.0,1.0) {$6$};
            \node[batch, fill={white}, minimum width=1.0cm, text width=1.0cm] at (11.0,3.0) {$12$};
            \node[batch, fill={white}, minimum width=1.005cm, text width=1.005cm] at (11.0,1.0) {$10$};
            \node[batch, fill={white}, minimum width=2.097cm, text width=2.097cm] at (11.0,2.0) {$11$};
            \node[batch, fill={white}, minimum width=3.cm, text width=3.5cm] at (11.0,0.0) {$9$};
            \node[batch, fill={lightgray}, minimum width=2.0109999999999997cm, text width=2.0109999999999997cm] at (4.888999999999999,4.0) {$\mathcal{A}_1$};
            \node[batch, fill={pink}, minimum width=1.3cm, text width=1.3cm] at (9.6,4.0) {$\mathcal{A}_2$};   
        \end{tikzpicture}
        \label{batch1}
    \end{subfigure}
    \begin{subfigure}{0.5\textwidth}
        \begin{tikzpicture}[scale=1/2, transform shape, y=1.5cm,
            worker/.style={circle, minimum width=0.5cm, minimum height=1cm, anchor=north west, text width=0.5cm, align=right, draw},
            batch/.style={minimum height=1cm, anchor=north west, align=right, draw},
            ]
            \tikzstyle{every node}=[font=\Large]
            \node[worker] at (0.0,4.0) {0};
            \node[worker] at (0.0,3.0) {1};
            \node[worker] at (0.0,2.0) {2};
            \node[worker] at (0.0,1.0) {3};
            \node[worker] at (0.0,0.0) {4};
            \node[batch, fill={white}, minimum width=1.583cm, text width=1.583cm] at (1.0,3.0) {$4$};
            \node[batch, fill={white}, minimum width=2.051cm, text width=2.051cm] at (1.0,1.0) {$2$};
            \node[batch, fill={white}, minimum width=2.441cm, text width=2.441cm] at (1.0,0.0) {$1$};
            \node[batch, fill={white}, minimum width=3.789cm, text width=3.789cm] at (1.0,2.0) {$3$};
            \node[batch, fill={lightgray}, minimum width=1.669cm, text width=1.669cm] at (7.0,0.0) {$5$};
            \node[batch, fill={lightgray}, minimum width=1.944cm, text width=1.944cm] at (7.0,3.0) {$8$};
            \node[batch, fill={lightgray}, minimum width=2.437cm, text width=2.437cm] at (7.0,2.0) {$7$};
            \node[batch, fill={lightgray}, minimum width=2.5cm, text width=2.5cm] at (7.0,1.0) {$6$};
            \node[batch, fill={pink}, minimum width=1.0cm, text width=1.0cm] at (11.0,3.0) {$12$};
            \node[batch, fill={pink}, minimum width=1.005cm, text width=1.005cm] at (11.0,1.0) {$10$};
            \node[batch, fill={pink}, minimum width=2.097cm, text width=2.097cm] at (11.0,2.0) {$11$};
            \node[batch, fill={pink}, minimum width=3cm, text width=3.5cm] at (11.0,0.0) {$9$};
            \node[batch, fill={lightgray}, minimum width=2.0109999999999997cm, text width=2.0109999999999997cm] at (4.888999999999999,4.0) {$\mathcal{A}_1$};
            \node[batch, fill={pink}, minimum width=1.3cm, text width=1.3cm] at (9.6,4.0) {$\mathcal{A}_2$};    
        \end{tikzpicture}
        \label{batch2}
    \end{subfigure}
    \caption{Illustration of batch synchronous procedure with a batch of four parameters. Circles and rectangles represent workers and jobs, respectively, and the numbers inside denote the indices of the corresponding workers and jobs. The left panel illustrates the jobs initiating the same-color acquisition. Once the simulation evaluations of the jobs with indices $\{1, 2, 3, 4\}$ and $\{5, 6, 7, 8\}$ are received, the acquisitions $\mathcal{A}_1$ and $\mathcal{A}_2$ are initiated, respectively. The right panel shows the jobs resulting from the same-color acquisition. The acquisitions $\mathcal{A}_1$ and $\mathcal{A}_2$ generate the jobs with indices $\{5, 6, 7, 8\}$ and $\{9, 10, 11, 12\}$ that are allocated to the workers.}
    \label{fig:batchsynch}
\end{figure}
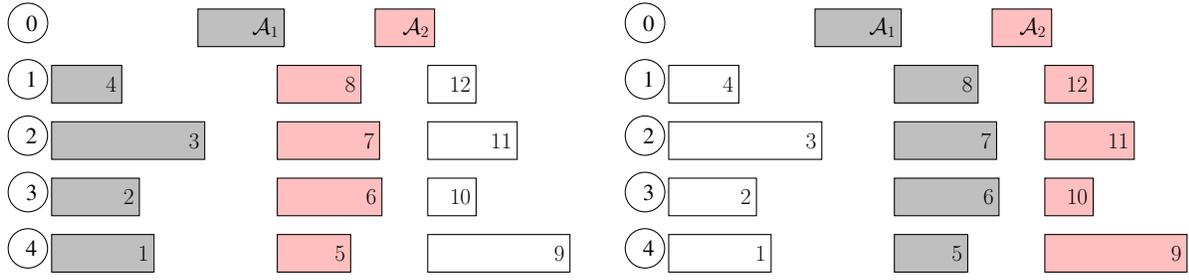

\begin{algorithm}[H] \label{alg:batch}
   
    \spacingset{1} 
    \caption{Batch-sequential Bayesian experimental design}
        \emph{Initialize} $\mathcal{D}_1 = \{(\thetav_{i}, \model(\thetav_{i})) : i = 1, \ldots, n_0\}$
        
        \For {$t = 1,\ldots,n/b$} {
        
            $\widehat{\mathcal{D}}_{t} \gets \mathcal{D}_{t}$; fit an emulator \tcb{$\model(\thetav)|\widehat{\mathcal{D}}_{t}$} 
            
                \For {$n_a = 1,\ldots, b$} {
                    \emph{Generate} candidate solutions $\mathcal{L}_t$
                    
                    \emph{Select} $\thetav^{\rm new} \in \argmin\limits_{\thetav^* \in \mathcal{L}_t} \mathcal{A}_{t,\widehat{\mathcal{D}}_{t}}(\thetav^*)$
                    
                    \emph{Update} $\widehat{\mathcal{D}}_t \gets \widehat{\mathcal{D}}_t \cup (\thetav^{\rm new}, \muv_t(\thetav^{\rm new}))$
                    
                    \emph{Update} the emulator fitted in line~3 with $\widehat{\mathcal{D}}_t$
    
                    }
            $\mathcal{D}_{t+1} \gets \mathcal{D}_{t}$
            
            \emph{Update} $\mathcal{D}_{t+1}$ with $b$ new parameters and their simulation outputs

            }
\end{algorithm}

Algorithm~\ref{alg:batch} describes the batch synchronous procedure where one has access to $b$ workers to run $b$ simulations in parallel, whereas Algorithm~\ref{alg:oaat} allows to run only one simulation at a time. At each stage $t$, we receive $b$ completed jobs from workers (lines 10) and then acquire $b$ new parameters iteratively (lines 4--8). This process is repeated $n/b$ times (assuming that $n/b$ is an integer value) to acquire $n$ parameters. Once $b$ new parameters are acquired at each stage, the corresponding jobs to obtain their simulation outputs are allocated to the workers. In Algorithm~\ref{alg:batch}, a total of $n/b$ emulators are fitted from scratch. \tcb{Additionally, as the simulation output linked to the parameters is not yet obtained while constructing a batch, the emulator is updated using $\muv_t(\thetav^{\rm new})$ as emulation input (see lines 7-8), which is fitted at the start of each stage (see line 3).}
Imputing the current prediction as emulation input when building a batch is termed the kriging believer strategy in \cite{Ginsbourger2010} for Bayesian optimization. \tcb{We note that the kriging believer strategy is used to illustrate the propose batch-sequential procedure, and one can replace the kriging believer strategy with an alternative approach such as constant liar strategy.}
In Algorithm~\ref{alg:batch}, the hyperparameters for the emulator are not altered on line~8; only the covariance matrix is updated with data $\widehat{\mathcal{D}}_t$ and the prediction mean remains unchanged as the emulator is updated using $\muv_t(\thetav^{\rm new})$. The hyperparameters of the emulator are relearned at the beginning of each stage when fitting an emulator from scratch (line~3) after $b$ simulation outputs are received from workers during the previous stage. In Algorithm~\ref{alg:batch}, in addition to stage index $t$, $\widehat{\mathcal{D}}_t$ is a subscript of the acquisition function $\mathcal{A}_{t,\widehat{\mathcal{D}}_{t}}$ on line~6 to indicate that \tcb{each time} a new parameter is selected the acquisition function is minimized using the emulator based on data $\widehat{\mathcal{D}}_t$.

\section{Results}
\label{sec:results}

This section compares the performance of sequential and batch-sequential approaches with various acquisition functions described in Sections~\ref{sec:acquisition} and \ref{sec:additionalfuncs}. Section~\ref{sec:syntheticresults} considers the one-at-a-time sequential procedure on a variety of test functions. 
Section~\ref{sec:physicsexample} focuses on a computational physics problem and examines both sequential and batch-sequential approaches with batch sizes of $2, 4, 8, 16,$ and $32$. We provide an open source implementation of different selection strategies and their parallel implementations in the \texttt{PUQ} Python package \citep{PUQpackage}. All the test functions and the physics examples are provided in the \texttt{examples} directory within the \texttt{PUQ} Python package.

\subsection{Simulations from Synthetic Probability Densities}
\label{sec:syntheticresults}

\begin{figure}[t]
    \centering
    \begin{subfigure}{0.25\textwidth}
        \includegraphics[width=1\textwidth]{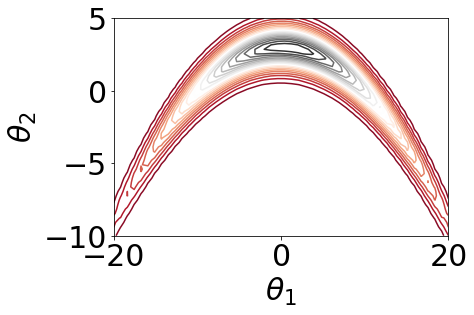}
    \end{subfigure}
    \begin{subfigure}{0.22\textwidth}
        \includegraphics[width=1\textwidth]{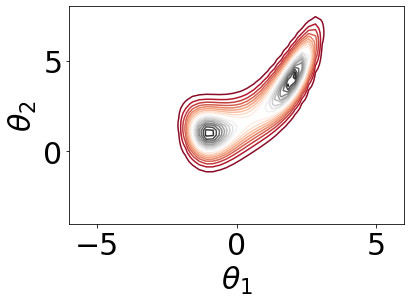}
    \end{subfigure}
    \begin{subfigure}{0.25\textwidth}
        \includegraphics[width=1\textwidth]{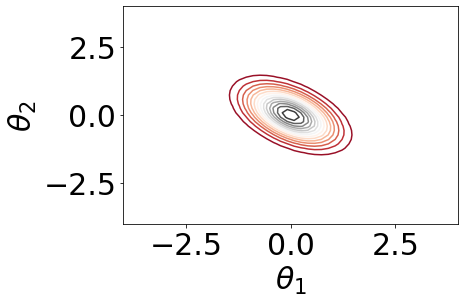}
    \end{subfigure}
    \begin{subfigure}{0.26\textwidth}
        \includegraphics[width=1\textwidth]{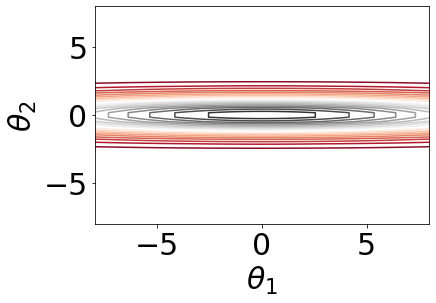}
     \end{subfigure}
    \caption{Posterior probability density illustration for four synthetic test functions. Labels of the test functions from left to right: banana, bimodal, unimodal, and unidentifiable.}
    \label{synth_figs}
\end{figure}
We investigate the performance of the sequential Bayesian experimental design with the proposed EIVAR acquisition function. As a baseline, we sample parameters from a uniform prior, and this approach is denoted as RND. 

To achieve a baseline outside of our framework, the results  are also compared with the minimum energy design (MINED) \citep{Joseph2015, Joseph2019}. Instead of minimizing the acquisition function over a candidate set, MINED uses a simulated annealing algorithm to minimize its acquisition function, and the total number of simulation evaluations to acquire $n$ parameters is larger than $n$, which makes it difficult to directly compare resulting designs; \tcb{thus,} the results are included in \tcb{Appendix~\ref{sec:app3}}. Results indicate that EIVAR performs better than MINED, and MINED has larger variability than EIVAR. We also note that MINED requires evaluating the simulation model \tcb{at more} than 1500 parameters to acquire 200 of them, whereas the proposed approach only requires $n + n_0$ evaluations (which is 210 in these experiments). Our article presumes that the simulation model is expensive to evaluate, meaning MINED is a potentially wasteful strategy relative to our proposed approach.

More reasonable competitors for a comparison use the sequential framework from Section~\ref{sec:overview} with the MAXVAR and MAXEXP acquisition functions described in Section~\ref{sec:additionalfuncs}. \tcb{When acquiring a set of parameters via MAXEXP criterion, since the diversity term depends on the distance between two sets of parameters, it is computed after the parameters are transformed into a unit hypercube.} 
We examine these different acquisition procedures using four common synthetic test functions (see examples in \cite{Joseph2019, Jarvenpa2019}) with different shapes of density illustrated in Figure~\ref{synth_figs}. 
In those examples, the parameter space is two-dimensional ($p = 2$) and the simulation model returns a two-dimensional output ($d = 2$) for the banana, bimodal, and unidentifiable functions, and a one-dimensional output ($d = 1$) for the unimodal function. 
We also examine performance with higher dimensional input parameters and simulation outputs, namely three-dimensional ($p = d = 3$), six-dimensional ($p = d = 6$), and ten-dimensional ($p = d = 10$) inputs and outputs.
The details of each simulation model and resulting likelihood are provided in Appendix~\ref{sec:app2}.

A uniform prior is assumed for all cases and the initial sample used to build the first emulator is taken from this prior with ranges given in Appendix~\ref{sec:app2}. \tcb{It is important to note that the prior probability density $p(\thetav)$ plays a critical role in EIVAR as in Equation~\eqref{approximateEIV} similar to the other Bayesian calibration procedures, and we provide an example in Appendix~\ref{sec:priorsensitivity} to test the sensitivity of EIVAR for various strengths of the prior.} We set the initial sample size $n_0 = 10$ for the two-dimensional functions in Figure~\ref{synth_figs} and set $n_0 = 20$, $n_0 = 40$, and $n_0 = 60$ for three-, six-, and ten-dimensional examples, respectively. The size of initial sample is chosen to allow the emulator to learn the response surface well enough during the initial stages and at the same time to allow improvements in the predictions through acquisitions.
At each stage $t$, the candidate list $\mathcal{L}_t$ is constructed as a random sample from the uniform prior with a size of $100$ for all the synthetic functions. 
For the banana, bimodal, unidentifiable, and unimodal functions, $200$ total parameters are acquired. For higher dimensional functions, $400$ total parameters are acquired.
To compare the performance of different acquisition functions, at each stage we report the mean absolute difference (MAD) between the estimated posterior and the ground truth posterior. 
In the experiments, we generate a set of reference parameters $\Theta_{\rm ref}$ and obtain the unnormalized posterior density for each parameter in this set. 
We then compute MAD at each stage $t$ as ${\rm MAD}_t(\tilde{p}, \hat{p}) = \frac{1}{|\Theta_{\rm ref}|} \sum_{\theta \in \Theta_{\rm ref}} |\tilde{p}(\thetav|\y) - \hat{p}_t(\thetav|\y)|$, where $\tilde{p}(\thetav|\y)$ is the unnormalized posterior density and $\hat{p}_t(\thetav|\y)$ is the estimated unnormalized posterior at stage $t$, which is obtained via Equation~\eqref{expectedpostfinal}. 
For the examples in Figure~\ref{synth_figs}, $\Theta_{\rm ref}$ is generated in a two-dimensional grid of $50^2$ points. For higher dimensional examples, since the size of a grid covering the parameter space becomes very large, $10^4$ points are generated \tcb{using} LHS. The set $\Theta_{\rm ref}$ is also used in Equation~\eqref{approximateEIV} to approximate the integral. 
\begin{figure}[t]
    \centering
    \begin{subfigure}{0.4\textwidth}
        \includegraphics[width=1\textwidth]{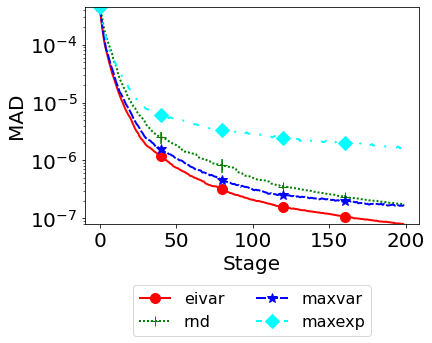}
    \end{subfigure}
    \begin{subfigure}{0.4\textwidth}
        \includegraphics[width=1\textwidth]{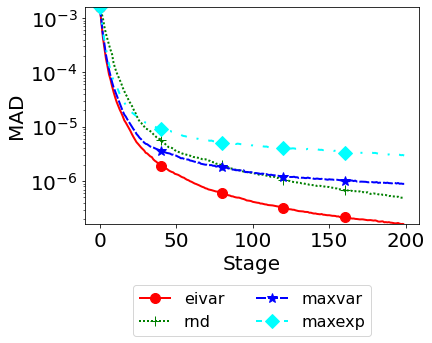}
    \end{subfigure}
    \begin{subfigure}{0.4\textwidth}
        \includegraphics[width=1\textwidth]{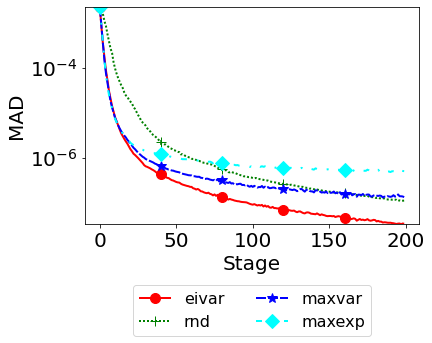}
    \end{subfigure}
    \begin{subfigure}{0.4\textwidth}
        \includegraphics[width=1\textwidth]{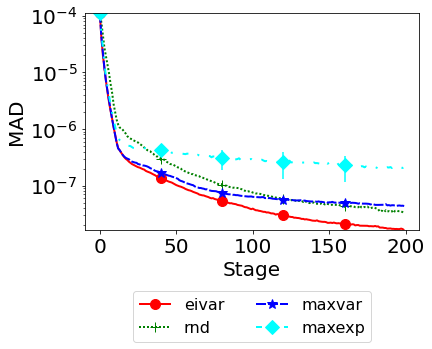}
     \end{subfigure}
    \caption{Comparison of different acquisition functions using the test functions illustrated in Figure~\ref{synth_figs} containing banana (upper-left), bimodal (upper-right), unimodal (lower-left), and unidentifiable (lower-right). Error bars denote $95 \%$ confidence intervals for MAD across 50 replicates. Algorithm~\ref{alg:oaat} is used to acquire 200 parameters.}
    \label{synth_figs_compare}
\end{figure}

\begin{figure}[t]
    \centering
    \begin{subfigure}{0.32\textwidth}
        \includegraphics[width=1\textwidth]{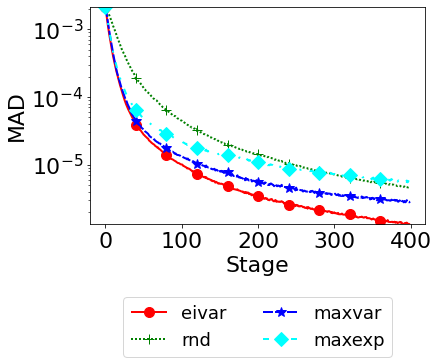}
    \end{subfigure}
    \begin{subfigure}{0.32\textwidth}
        \includegraphics[width=1\textwidth]{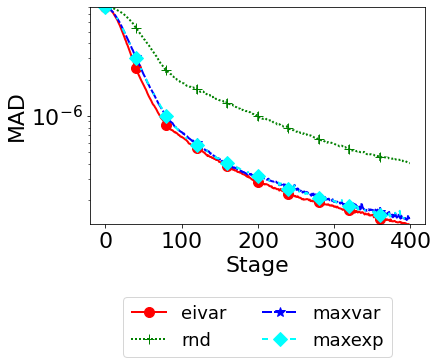}
    \end{subfigure}
    \begin{subfigure}{0.32\textwidth}
        \includegraphics[width=1\textwidth]{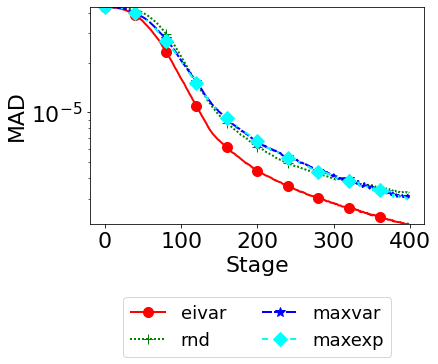}
    \end{subfigure}
    \caption{Comparison of four acquisition functions using three- (left), six- (center), and ten-dimensional (right) test functions. Error bars denote $95 \%$ confidence intervals for MAD across 50 replicates. Algorithm~\ref{alg:oaat} is used to acquire 400 parameters.}
    \label{synth_figs_compare2}
\end{figure}

Figures~\ref{synth_figs_compare}-\ref{synth_figs_compare2} summarize the accuracy results and indicate that EIVAR has the lowest MAD between the estimated posterior and the ground truth posterior for all seven test functions. 
From our observations of acquired parameters in low dimensional functions (see Figure~\ref{synth_figs_acq} in Appendix~\ref{sec:app2}), MAXVAR tends to ignore the regions where the cumulative uncertainty is larger and instead picks points at the boundary where the variance is larger. On the other hand, EIVAR recognizes that the acquired parameters impact the posterior estimation globally, and it acquires points both at the interior and boundary of the high-posterior region.
The RND strategy performs better than MAXEXP for the functions in Figure~\ref{synth_figs_compare}, especially during later stages of the sequential procedure. However, for higher-dimensional spaces (see left and center panels) it takes more time to accurately estimate the posterior for RND strategy as compared to other methods. The prior range of parameters is kept \tcb{the} same in three- and six-dimensional examples; \tcb{Figure~\ref{synth_figs_compare}} clearly shows that when the high posterior region becomes \tcb{smaller relative} to the support of the prior distribution, it becomes more challenging for RND to target the high posterior region. One disadvantage of EIVAR is the computational time needed to acquire a single parameter on higher-dimensional spaces since EIVAR requires an integral evaluation. \tcb{This limitation can be mitigated as $p$ grows, and in order to overcome the curse of dimensionality, one can consider sparse grids or other clever sampling or quadrature schemes for estimating higher dimensional integrals.}
Naturally, this computational expense is less of an issue when it is expensive to obtain a simulation output, as is the case described in the next subsection.

\subsection{Simulations in the Case of Discrepancy}
\begin{figure}[ht]
    \centering
    \begin{subfigure}{0.34\textwidth}
        \includegraphics[width=1\textwidth]{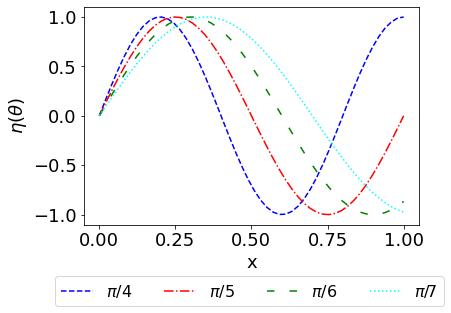}
    \end{subfigure}
     \begin{subfigure}{0.28\textwidth}
        \includegraphics[width=1\textwidth]{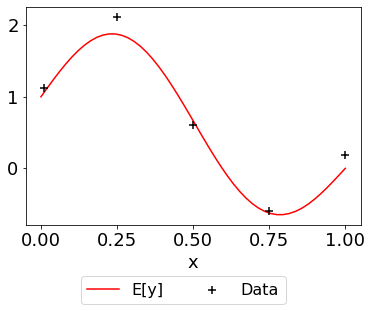}
    \end{subfigure}
    \begin{subfigure}{0.32\textwidth}
        \includegraphics[width=1\textwidth]{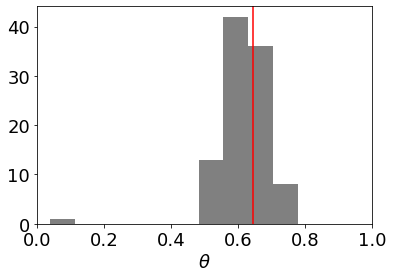}
    \end{subfigure}
    \caption{\tcb{An illustrative example in the case of discrepancy term with $\y = \model(\theta=\frac{\pi}{5}) + \mathbf{b} + \epsilonv$, $\model(\theta) = (\eta(x_1, \theta), \ldots, \eta(x_5, \theta))$, and $\mathbf{b} = (b(x_1), \ldots, b(x_5))$ where $\eta(x, \theta) = \sin(10 x \theta)$ and $b(x) = 1 - \frac{1}{3} x - \frac{2}{3} x^2$ at design points $(x_1, x_2, x_3, x_4, x_5) = (0, 0.25, 0.50, 0.75, 1)$. In the left panel, lines correspond to the simulation outputs at $\theta \in \{\frac{\pi}{4}, \frac{\pi}{5}, \frac{\pi}{6}, \frac{\pi}{7}\}$. In the middle panel, the red line corresponds to $\mathbb{E}(\y) = \model(\theta=\frac{\pi}{5}) + \mathbf{b}$, and the black markers show the observed data with $\sigma_\varepsilon^2 = 0.2^2$. The right panel illustrates the distribution of 100 acquired parameters along with the red vertical line at the least-squares fit parameter value minimizing the sum of the squared errors between $\y$ and $\mathbb{E}(\y)$.}}
    \label{fig:biastoy}
\end{figure}
\tcb{The results in Section~\ref{sec:syntheticresults} assume known $\Sigmav$ with no discrepancy between the simulation model and the observed data. In this section, we observe the acquired parameters via the EIVAR criterion when the covariance matrix of errors between the observed data and the simulation model is unknown. As an illustration, consider the example in Figure~\ref{fig:biastoy} with a single parameter in $[0, 1]$; see also the illustrative example presented in \cite{Koermer2023} for details. The left panel of Figure~\ref{fig:biastoy} shows simulation outputs at four different parameter values. In the middle panel, the mean of the observed data $\mathbb{E}(\y) = \model(\theta) + \mathbf{b}$ and the observed data at five design points on an equally spaced grid are shown. We set the initial sample size $n_0 = 10$, and acquire 100 parameters using the procedure described in Section~\ref{sec:KOH} to account for the observation noise and the uncertainty in the model in the form of discrepancy. We adopt a covariance form $\Sigmav^e$ defined by $\Sigmav^e_{i,j} = \sigma_\varepsilon^2 \delta_{i,j} + \sigma_b^2 \exp(-\lambda|x_i - x_j|)$ for $i, j = 1, \ldots, d$, where $\delta_{i,j}$ is a Kronecker delta function, and the likelihood in Equation~\eqref{eq:bias_like} is optimized to find the plausible estimates of ancillary parameters $\thetav^e = [\sigma_\varepsilon^2, \sigma_b^2, \lambda]$ at each stage. The least-squares fit parameter value is obtained by minimizing the sum of the squared errors between $\y$ and $\mathbb{E}(\y)$. In this example, the EIVAR criterion with unknown covariance matrix is able to target the high posterior region by acquiring almost all parameters concentrated around the least-squares fit parameter value.} 

\subsection{Application to a Reaction Model}
\label{sec:physicsexample}

We now illustrate our sequential strategy with a nuclear physics model to predict differential cross section as a function of angle. Different parameterizations of the optical potential can result in different angular cross sections, and
the reaction code \texttt{FRESCOX} \citep{fresco} takes the optical model parameters as input and generates the corresponding cross sections across angles from $0^{\circ}$ to $180^{\circ}$ \citep{Lovell2017}. 
This case study uses the elastic scattering data for the 48Ca(n,n)48Ca reaction presented in \cite{Lovellphdthesis} to obtain the best-fit parameterization for optical parameters (see Appendix~B in \cite{Lovellphdthesis} for details). 
To simplify the example, some parameters from \cite{Lovellphdthesis} are fixed, since these parameters reproduce the data $\y$ well, while the remaining three parameters are acquired with the proposed approach. Table~\ref{tab:fresco_parameter} summarizes the parameters used in this case study and their prior ranges.
\begin{table}[t]
    \begin{center}
        \begin{tabular}{c c c c}
        Label & Input & Range & Unit  \\ \hline
        $\theta_1$ & $V$: depth (real volume) & [40, 60] & MeV \\
        $\theta_2$ & $r$: radii & [0.7, 1.2] & fm \\
        $\theta_3$ & $W_s$: depth (imaginary surface) & [2.5, 4.5] & MeV \\
        \end{tabular}
    \end{center}
\caption{Parameters and their ranges for the 48Ca(n,n)48Ca reaction model.}
\label{tab:fresco_parameter}
\end{table}

The elastic scattering data $\y$ is available only at $15$ different degree values, and the simulation outputs (i.e., cross sections) are obtained at the corresponding degree values. \tcb{We assume a known diagonal covariance matrix such that $\Sigmav_{i,i} = 0.1$ for $i = 1, \ldots, 15$}. We take $32$ initial samples and obtain $400$ parameters with each acquisition function. Each of the acquired parameters is chosen from the candidate list $\mathcal{L}_t$ of size 100, which is generated by sampling from the uniform prior at each stage $t$. First, we compare different acquisition strategies with Algorithm~\ref{alg:oaat} (e.g., $b=1$). Then, we focus on batch strategies with EIVAR and obtain the results with $b \in \{2, 4, 8, 16, 32\}$ via a batch synchronous procedure presented in Algorithm~\ref{alg:batch} (the number of workers equals to $b$ to run simulations in parallel). Figure~\ref{fresco_result} summarizes the results with both sequential and batched approaches. 

\begin{figure}[t]
    \centering
    \begin{subfigure}{1\textwidth}    \centering
        \includegraphics[width=0.4\textwidth]{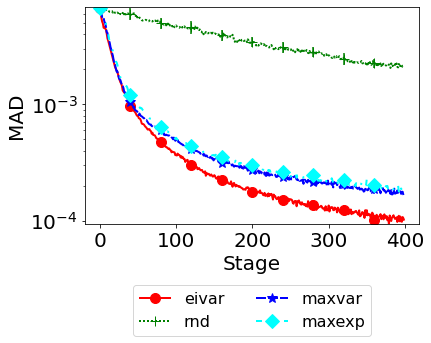}
    \end{subfigure}
    \begin{subfigure}{0.4\textwidth}
        \includegraphics[width=1\textwidth]{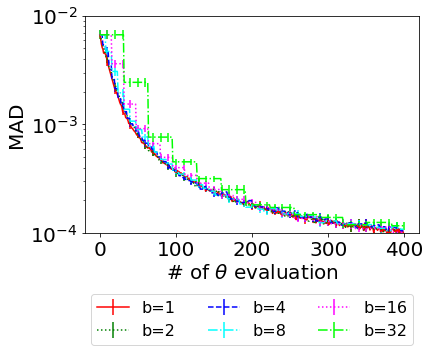}
    \end{subfigure}
        \begin{subfigure}{0.4\textwidth}
        \includegraphics[width=1\textwidth]{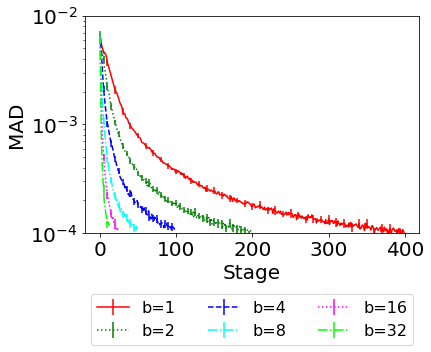}
    \end{subfigure}
    \caption{Results of physics reaction model. Error bars denote $95 \%$ confidence intervals for MAD across 50 replicates. Panel in the first row compares different acquisition functions; panels in the second row compare different batch sizes for the proposed EIVAR criterion with two perspectives.}
    \label{fresco_result}
\end{figure}
As illustrated in the first row of Figure~\ref{fresco_result}, with the sequential procedure the posterior is well-estimated with EIVAR as compared to the other methods, whereas the prediction accuracy with RND does not even improve significantly over time. Figure~\ref{cross_sections} illustrates the cross sections of the acquired parameters for RND, MAXVAR, and EIVAR. Since RND does not consider the system data (e.g., black plus markers in Figure~\ref{cross_sections}), most of the cross sections are placed onto the outside of the region of interest. While MAXVAR considers densely the regions around the data, EIVAR focuses on a wider region where the posterior is smaller but not exactly zero. Therefore, EIVAR covers the space well enough by not considering the zero-posterior region but at the same time distributing the acquired parameters around both high and low posterior regions to better learn the posterior. Finally, we observe the distribution of acquired parameters with EIVAR in Figure~\ref{eivar_hist} (left panel) along with the values of the best-fit parameters. Since the best-fit parameters are included into region specified with the acquired parameters, EIVAR is also able to identify the highest posterior region.
\begin{figure}[t]
    \centering
        \begin{subfigure}{0.32\textwidth}
        \includegraphics[width=1\textwidth]{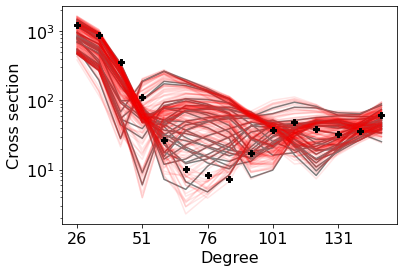}
    \end{subfigure}
    \begin{subfigure}{0.32\textwidth}
        \includegraphics[width=1\textwidth]{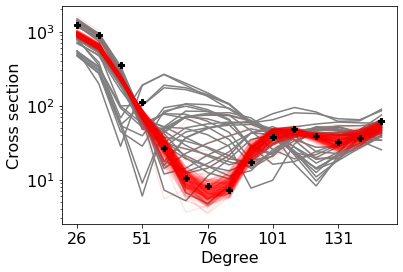}
    \end{subfigure}
    \begin{subfigure}{0.32\textwidth}
        \includegraphics[width=1\textwidth]{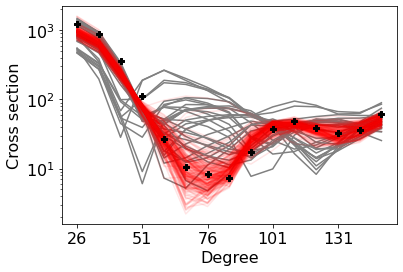}
    \end{subfigure}
    \caption{Cross sections (light-red lines) obtained with the sequential strategy via acquisition functions RND (left), MAXVAR (center), and EIVAR (right). Light-gray lines illustrate the cross sections for the initial sample. Black plus markers represent the observed data $\y$.}
    \label{cross_sections}
\end{figure}
\begin{figure}[t]
    \centering
    \begin{subfigure}{0.46\textwidth}
        \includegraphics[width=1\textwidth]{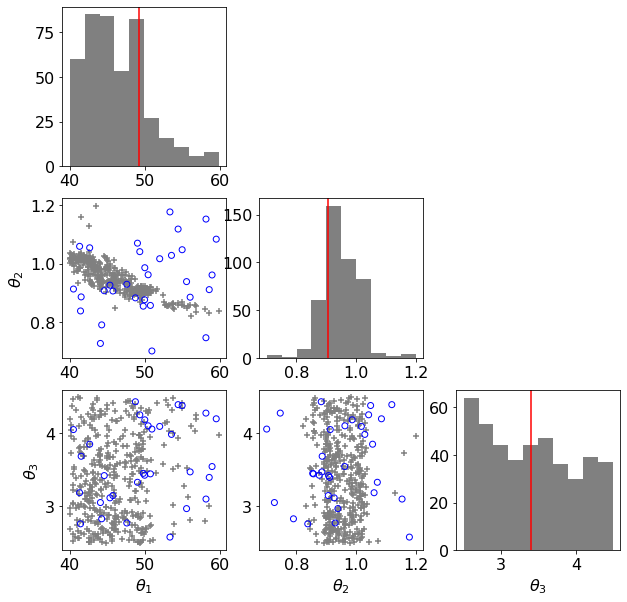}
    \end{subfigure}
    \begin{subfigure}{0.46\textwidth}
        \includegraphics[width=1\textwidth]{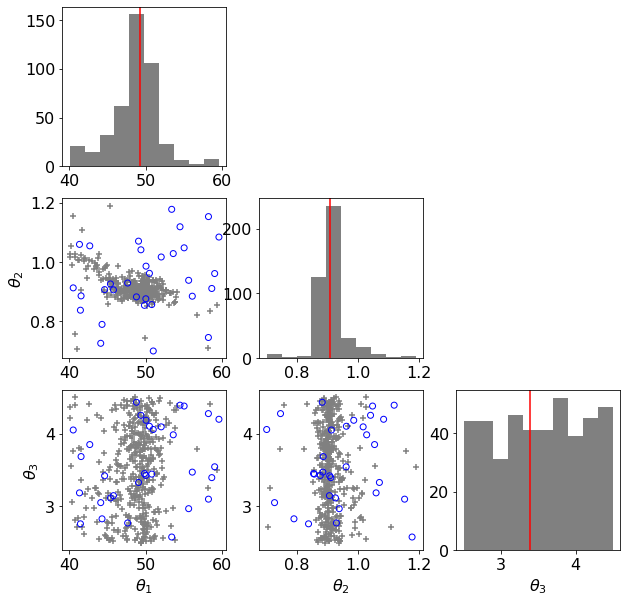}
    \end{subfigure}
    \caption{\tcb{Pairwise plot for acquired parameters with EIVAR (gray plus markers) with known (left panel) and unknown (right panel) covariance matrix. Blue circles are the initial sample obtained with LHS. The red line corresponds to the parameter values obtained with the best-fit parameterization.}}
    \label{eivar_hist}
\end{figure}

The results of the batch synchronous approach with batch sizes $b \in \{2, 4, 8, 16, 32\}$ are provided in the second row of Figure~\ref{fresco_result}. 
While the left panel summarizes MAD after obtaining each simulation output ($x$-axis represents each of 400 simulation evaluations), the right panel shows MAD after each stage ($x$-axis represents each stage). 
One can consider the figure in the right panel as a proxy to the total wall-clock time if the simulation was the dominant computational cost. 
Since 400 parameters are acquired with each of the batch sizes, sequential approaches with larger batch sizes terminate in fewer batches. 
Although the final accuracy of different batch sizes does not differ significantly between each other, there is a significant difference between the sequential strategy with $b=1$ and the batched strategy with $b=32$ up until 200 simulation evaluations are completed. 
This indicates that the quality of emulator with the kriging-believer strategy starts degrading with an increasing batch size, especially during the early periods of data collection when the acquired parameters and their corresponding simulation outputs are not enough to explore the space. Thus, a user should decide on the best batch size considering the trade-off between the computational time and accuracy.

\tcb{All the results so far use the known diagonal covariance matrix $\Sigmav$ with diagonal elements of 0.1 to acquire parameters. 
Although the assumption comes from the experts' best understanding of the experiment, one can also use the procedure presented in Section~\ref{sec:KOH} to learn aspects of the covariance. 
To illustrate this, we use the reaction code \texttt{FRESCOX} with the same settings introduced above except that we assume an unknown covariance $\Sigmav^e$ such that $\Sigmav^e_{i,j} = \sigma_\varepsilon^2 \delta_{i,j} + \sigma_b^2 \exp(-\lambda|x_i - x_j|)$, for $i, j = 1, \ldots, d$, to account for both the observation noise and the model uncertainty in the form of discrepancy where $\thetav^e = (\sigma_\varepsilon^2, \sigma_b^2, \lambda)$. The observed data $\y$ is collected at 15 design points (i.e., at angles) $(x_1, \ldots, x_{15}) = (26, 31, 41, 51, 61, 71, 76, 81, 91, 101,$ $111, 121, 131, 141, 151)$. The right panel of Figure~\ref{eivar_hist} shows the distribution of acquired parameters with an unknown covariance matrix and demonstrates that EIVAR is able identify the high posterior region in this situation as well. Figure~\ref{cross_sections_bias} illustrates the cross sections of the acquired parameters; similar to the known covariance matrix situation, EIVAR focuses on a wider region as compared to the other approaches tested. Additionally, we have tested EIVAR with an unknown covariance matrix $\Sigmav^e$ to incorporate only the observation noise such that $\Sigmav^e_{i,j} = \sigma_\varepsilon^2 \delta_{i,j}$, for $i, j = 1, \ldots, d$. For this case, the cross sections and the pairwise plot of acquired parameters are similar to the ones obtained with the known covariance illustrated in Figure~\ref{cross_sections} (right panel) and Figure~\ref{eivar_hist} (left panel), respectively. Although the resulting cross sections cover a similar region in all three of these cases, the parameters (especially $\theta_1$ and $\theta_2$) are tightly acquired around the best-fit parameters by the unknown covariance with the discrepancy term assumption (as in the right panel of Figure~\ref{eivar_hist}) indicating that the discrepancy term helps better constrain the range of acquired parameters.}
\begin{figure}[t]
    \centering
        \begin{subfigure}{0.45\textwidth}
        \includegraphics[width=1\textwidth]{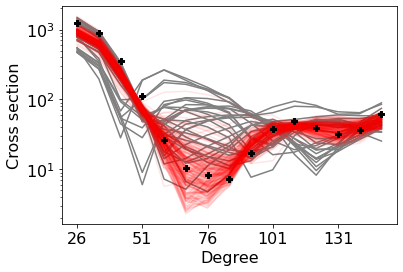}
    \end{subfigure}
    \caption{\tcb{Cross sections (light-red lines) obtained with the sequential strategy via EIVAR with the discrepancy term. Light-gray lines illustrate the cross sections for the initial sample. Black plus markers represent the observed data $\y$. }}
    \label{cross_sections_bias}
\end{figure}

\section{Conclusion}

This article proposes a novel sequential Bayesian experimental design procedure to estimate the posterior that is known up to a proportionality constant. Our approach selects the most informative parameters using previous simulation evaluations and \tcb{places a majority of them in the} high-posterior region, which facilitates the learning of the posterior region. The proposed approach can be advantageous especially when dealing with computationally expensive simulation models since it requires fewer number of simulation evaluations than space-filling approaches to learn the posterior at the same accuracy level. We extend a single acquisition case to batch acquisitions as well. Both sequential and batched settings require only the simulation model and the prior distribution for the parameters as input.
Thus, practitioners can easily use the proposed approaches for a wide variety of calibration applications. 

In this paper, a novel EIVAR criterion is derived for calibrating computationally expensive simulation models, and its closed-form expression makes it appealing for a variety of future work. Inspiring from the aggregate variance criterion when building an emulator with globally high prediction accuracy, one area is to extend the criterion for stochastic simulation models \citep{Binois2019}. In parallel to this, our proposed EIVAR criterion can be further developed to decide whether a new parameter is acquired for replication or exploration of the posterior surface when calibrating stochastic simulation models. Another area is to replace the aforementioned emulator with enhanced emulators such as deep GPs \citep{Sauer2022}, and the parameters can be selected via EIVAR sequentially for the purpose of calibration when dealing with complex response surfaces. \tcb{Moreover, in the presence of non-identifiability, multiple parameter sets may result in high posterior density, and this issue may be diminished or controlled with the right active learning criteria. In such a setting, one can utilize both the proposed EIVAR and the new active learning criteria in an hybrid way to better estimate the posterior.} \tcb{In the case where a simulation model returns one-dimensional output $\eta(\xv, \thetav)$ at a design point $\xv$ and calibration parameter $\thetav$, one can consider acquiring a design point and a calibration parameter simultaneously. For this new setup, one can consider replacing the proposed emulation strategy with a global GP emulator. Since our results do not depend on the way the emulator is built, our derivations can still be useful for this setting with minor adjustments to the resulting statements and proofs. Acquiring design points and the calibration parameters simultaneously would be especially of interest when a physical experimental design is the goal and the sequential collection of data $\y$ is possible. Additionally, in this paper, our results assume that the residual error is sampled from a multivariate normal distribution; another line of future development would be to derive the EIVAR criterion for different distributions of the residual error. }

Batch strategies allow computationally expensive simulation models to be run in parallel.
In cases where the run times of each simulation evaluation are roughly equal, the batch synchronous sequential approach described here is effective to reduce the wall-clock time. 
However, if the run times in a batch vary, the batch synchronous approach may lead to inefficient utilization of the computational resources since workers are left idle while waiting for the slowest job in the batch to finish. In order to improve the utilization of parallel computing resources, we can run function evaluations asynchronously: as soon as $b$ workers complete their jobs, $b$ new parameters are chosen and sent to workers choose new tasks for them (with $b < \text{the number of workers}$). A batch asynchronous procedure has received little attention as compared to synchronous one. A more-detailed performance analysis for both the synchronous and asynchronous procedures and a guideline for the selection of the best batch size would be of \tcb{great interest to practitioners} as a future study.

\section*{Acknowledgement}
The authors are grateful to Amy Lovell and Filomena M. Nunes for taking time to provide us physics reaction model.

\appendix

\section{Appendix}
\label{sec:app}

\subsection{Comparison of EIVAR with Common Acquisition Functions}
\label{sec:ei_imse_eivar}

\tcb{The expected improvement (EI) and the integrated mean squared error (IMSE) are the two common acquisition functions in the literature. While the EI criterion acquires the set of parameters offering the greatest expected improvement over the current best parameter to maximize a function (e.g., the posterior), IMSE selects the set of parameters that minimizes the overall uncertainty in the estimate of the simulation model at each stage to build a globally accurate emulator. 
In this section, we demonstrate the differences between EIVAR, IMSE, and EI acquisition functions with a one-dimensional illustrative example. }

\tcb{
To implement the EI criterion, a GP emulator of the posterior is obtained at each stage $t$ (see Section \ref{sec:roleGP} for an example), and then EI uses
\begin{equation}\label{eq:ei}
    \thetav^{\rm new} \in \argmax_{\thetav \in \mathcal{L}_t} \left((\breve{m}_t(\thetav) - p_{\max}) \Phi\left(\frac{\breve{m}_t(\thetav) - p_{\max}}{\breve{\sigma}_t(\thetav)}\right) + \breve{\sigma}_t(\thetav) f_{\mathcal{N}}\left(\frac{\breve{m}_t(\thetav) - p_{\max}}{\breve{\sigma}_t(\thetav)}; \, 0, \, 1\right)\right),
\end{equation}
where $p_{\max} = \max_{i=1,\ldots,n_t} \tilde{p}\left(\thetav_i|\y\right)$, $\Phi(\cdot)$ is the cumulative density function of the standard normal distribution, and $\breve{m}_t(\thetav)$ and $\breve{\sigma}^2_t(\thetav)$ are the prediction mean and variance resulting from a GP model for the posterior. To compute the IMSE at any parameter, an emulator of the simulation model is obtained at each stage as described in the paper and the parameter that minimizes the total uncertainty of the emulator is acquired via
\begin{equation}\label{eq:imse}
    \thetav^{\rm new} \in \argmin_{\thetav^* \in \mathcal{L}_t} \sum_{\thetav \in \Theta_{\rm ref}} \varsigma^2_{t+1}(\thetav),
\end{equation}
where $\varsigma^2_{t+1}(\thetav)$ depends on the chosen parameter $\thetav^*$ and is obtained as in Equation~\eqref{var_future} except that the index $j$ is dropped as we only explore the IMSE criterion for $d=1$.
} 

\tcb{Figure~\ref{fig:MAD_EI_IMSE_EIVAR} summarizes the accuracy results across 50 replicates to acquire 10 parameters and Figure~\ref{fig:IMSEvsEIVAR} illustrates 10 acquired points for a single replication using the one-dimensional example with EI (first column), IMSE (second column), and EIVAR (third column) acquisition functions to gain insights about how the acquisition functions perform. The EI criterion acquires hyper-localized parameters and results in a myopic focus on the highest posterior region. Moreover, EI is not able to explore the medium-level posterior region, and thus the posterior is not correctly estimated at the corresponding parameter values. On the other hand, IMSE acquires parameters to build an emulator with a high prediction accuracy on the entire support of the parameters. Consequently, IMSE acquires many parameters at near-zero posterior regions, and it does not bring any additional value for estimating the posterior. This can be problematic especially if the available computational budget is limited and the simulation model is expensive to evaluate since the computational resources would be wasted for evaluations outside of the region of interest. In contrast, EIVAR faithfully acquires parameters from the nonzero posterior region to better estimate the posterior. Moreover, since IMSE does not specifically target the high posterior region, the uncertainty is higher and the predictive accuracy is lower on the high posterior region as compared to EIVAR. }
\begin{figure}[h!]
    \centering
    \begin{subfigure}{0.4\textwidth}
        \includegraphics[width=1\textwidth]{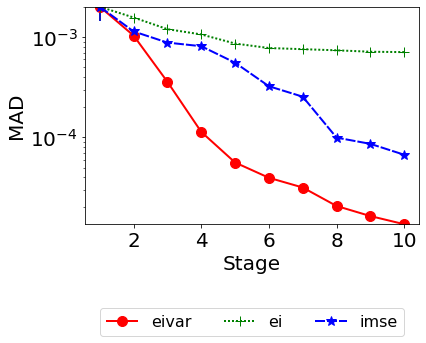}
    \end{subfigure}
        \caption{\tcb{Comparison of three acquisition functions using a test function $\eta(\theta) = \sin(\theta) + 0.5 \theta$, observation $y=-5$, and observation variance one. Algorithm~1 is used to acquire 10 parameters.}}
    \label{fig:MAD_EI_IMSE_EIVAR}
\end{figure}

\begin{figure}[h!]
    \centering
    \begin{subfigure}{0.32\textwidth}
        \includegraphics[width=1\textwidth]{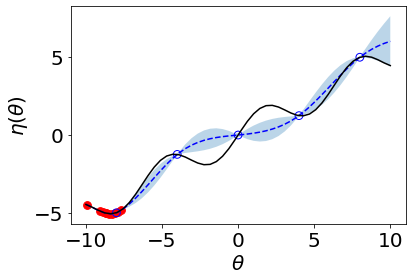}
    \end{subfigure}
    \begin{subfigure}{0.32\textwidth}
        \includegraphics[width=1\textwidth]{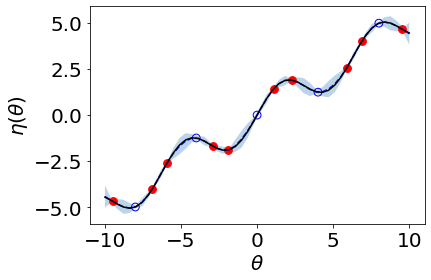}
    \end{subfigure}
    \begin{subfigure}{0.32\textwidth}
        \includegraphics[width=1\textwidth]{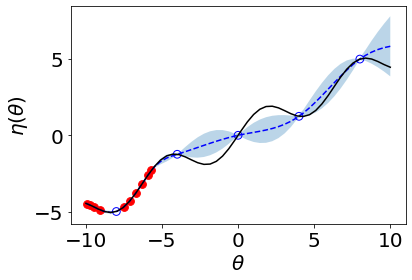}
    \end{subfigure}
    \begin{subfigure}{0.32\textwidth}
        \includegraphics[width=1\textwidth]{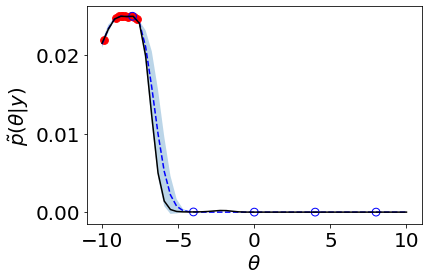}
    \end{subfigure}
    \begin{subfigure}{0.32\textwidth}
        \includegraphics[width=1\textwidth]{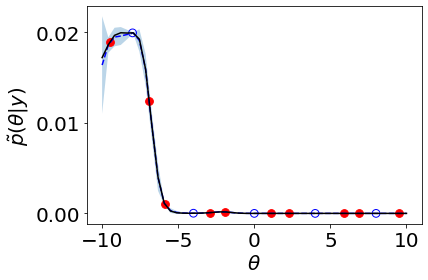}
    \end{subfigure}
    \begin{subfigure}{0.32\textwidth}
        \includegraphics[width=1\textwidth]{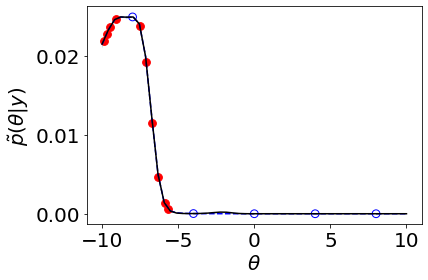}
    \end{subfigure}
    \caption{\tcb{10 samples (red dots) with EI (first column), IMSE (second column), and EIVAR (third column) acquisition functions for a test function $\eta(\theta) = \sin(\theta) + 0.5 \theta$, observation $y=-5$, and observation variance one for $\theta \in [-10, 10]$. Red dots indicate the acquired parameters and the black line illustrates the simulation model (first row) and the posterior (second row). The blue dashed-line shows the prediction mean and the shaded area illustrates two predictive standard deviation from the mean. Blue circles illustrate the 5 samples used to initiate the proposed procedure.}}
    \label{fig:IMSEvsEIVAR}
\end{figure}

\subsection{Additional Details and Experiments}
\label{sec:app2}
\begin{figure}[h!]
    \centering
    \begin{subfigure}{0.32\textwidth}
        \includegraphics[width=1\textwidth]{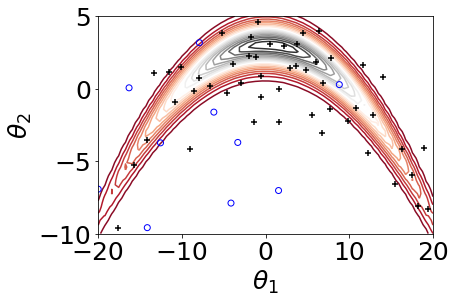}
    \end{subfigure}
    \begin{subfigure}{0.32\textwidth}
        \includegraphics[width=1\textwidth]{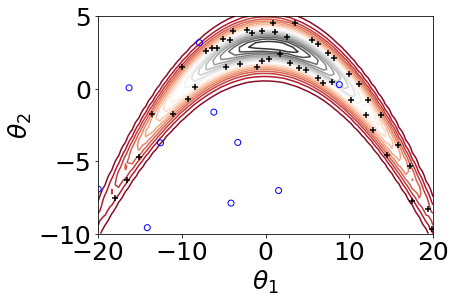}
    \end{subfigure}
    \begin{subfigure}{0.32\textwidth}
        \includegraphics[width=1\textwidth]{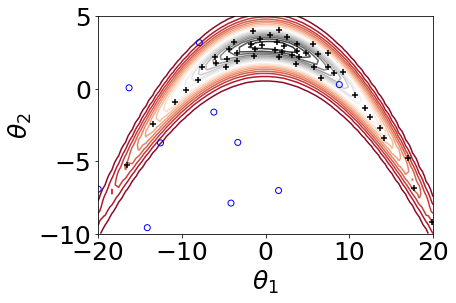}
    \end{subfigure}
    \begin{subfigure}{0.32\textwidth}
        \includegraphics[width=1\textwidth]{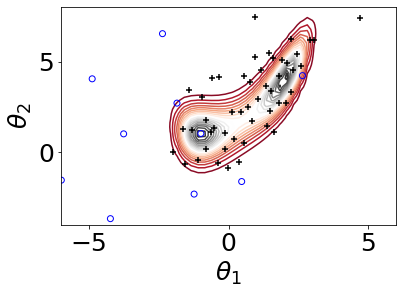}
     \end{subfigure}
    \begin{subfigure}{0.32\textwidth}
        \includegraphics[width=1\textwidth]{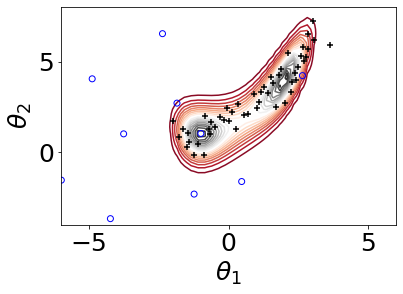}
     \end{subfigure}
    \begin{subfigure}{0.32\textwidth}
        \includegraphics[width=1\textwidth]{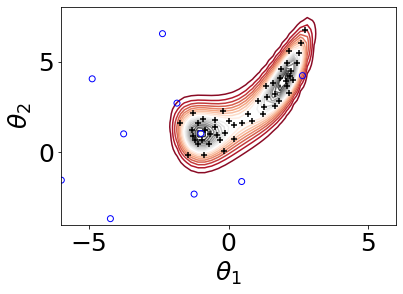}
     \end{subfigure}
    \begin{subfigure}{0.32\textwidth}
        \includegraphics[width=1\textwidth]{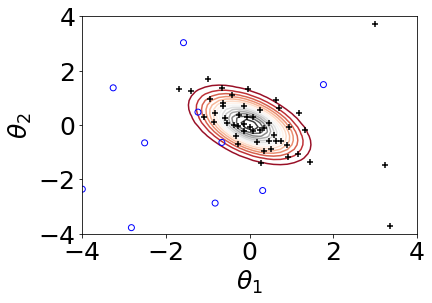}
     \end{subfigure}
    \begin{subfigure}{0.32\textwidth}
        \includegraphics[width=1\textwidth]{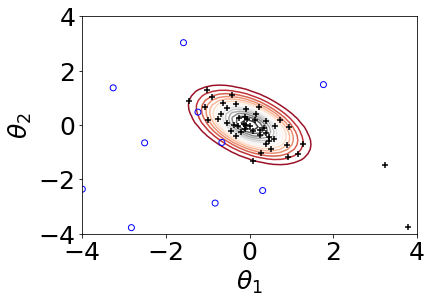}
     \end{subfigure}
    \begin{subfigure}{0.32\textwidth}
        \includegraphics[width=1\textwidth]{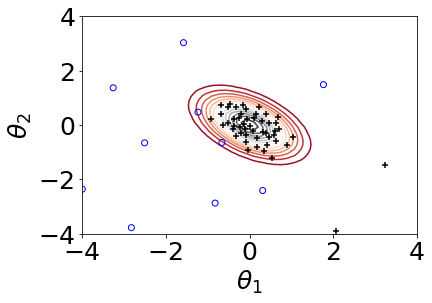}
     \end{subfigure}
         \begin{subfigure}{0.32\textwidth}
        \includegraphics[width=1\textwidth]{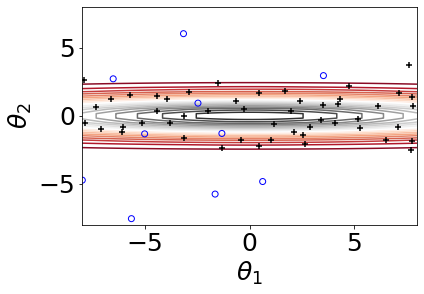}
     \end{subfigure}
    \begin{subfigure}{0.32\textwidth}
        \includegraphics[width=1\textwidth]{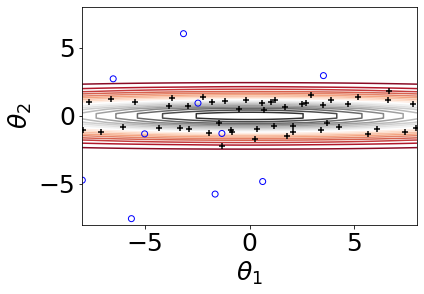}
     \end{subfigure}
    \begin{subfigure}{0.32\textwidth}
        \includegraphics[width=1\textwidth]{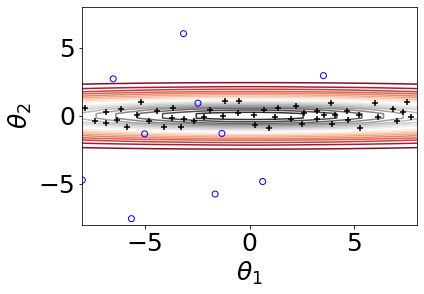}
     \end{subfigure}
    \caption{50 samples (plus markers) with different acquisition functions: EIVAR (first column), MAXVAR (second column), MAXEXP (last column). Test functions illustrated in Figure~\ref{synth_figs}: banana (first row), bimodal (second row), unimodal (third row), and unidentifiable (last row). Blue circles illustrate 10 samples used to initiate the proposed procedure.}
    \label{synth_figs_acq}
\end{figure}
In this section, we provide the details of the test functions used in our numerical experiments.
\begin{itemize}
    \item For the banana function (first plot in Figure~\ref{synth_figs}), we consider the unnormalized posterior given by
    \[
        \tilde{p}(\thetav|\y) = (2\pi)^{-1}|\Sigmav|^{-1/2} \exp\bigg\{-\frac{1}{2} \frac{(\theta_1-y_1)^2}{100} -\frac{1}{2} (\theta_2 + 0.03 \theta_1^2 - y_2)^2 \bigg\} p(\thetav)
    \]
    where $\theta_1 \in [-20, 20]$ and $\theta_2 \in [-10, 5]$. In the calibration context, we take $\thetav = (\theta_1, \theta_2)$ and $\model(\thetav) = (\theta_1, \theta_2 + 0.03 \theta_1^2)$ with $\y = (y_1, y_2) = (0, 3)$ and $\Sigmav$ is a diagonal matrix with diagonal elements $(10^2, 1)$.
    \item For the bimodal function (second plot in Figure~\ref{synth_figs}), we consider the unnormalized posterior given by
    \[
        \tilde{p}(\thetav|\y) = (2\pi)^{-1}|\Sigmav|^{-1/2} \exp\bigg\{-\frac{\sqrt{0.2}}{2} (\theta_2 - \theta_1^2 - y_1)^2 -\frac{\sqrt{0.75}}{2}  (\theta_2 - \theta_1 - y_2)^2 \bigg\} p(\thetav)
    \]
    where $\theta_1 \in [-6, 6]$ and $\theta_2 \in [-4, 8]$. In the calibration context, we take $\thetav = (\theta_1, \theta_2)$ and $\model(\thetav) = (\theta_2 - \theta_1^2, \theta_2 - \theta_1)$ with $\y = (y_1, y_2) = (0, 2)$ and $\Sigmav$ is a diagonal matrix with diagonal elements $(1/\sqrt{0.2}, 1/\sqrt{0.75})$.
    \item For the unimodal function (third plot in Figure~\ref{synth_figs}), we consider the unnormalized posterior given by
    \[
        \tilde{p}(\thetav|y) = \frac{1}{\sqrt{2\pi \sigma^2}} \exp\bigg\{-\frac{1}{2} \bigg(\frac{\theta_1^2 + \theta_1 \theta_2 + \theta_2^2 - y}{2}\bigg)^2 \bigg\} p(\thetav)
    \]
    where $\theta_1 \in [-4, 4]$ and $\theta_2 \in [-4, 4]$. In the calibration context, we take $\thetav = (\theta_1, \theta_2)$ and $\model(\thetav) = \theta_1^2 + \theta_1 \theta_2 + \theta_2^2$ with $y = -6$ and $\sigma^2 = 4$.
    \item For the unidentifiable function (last plot in Figure~\ref{synth_figs}), we consider the unnormalized posterior given by
    \[
        \tilde{p}(\thetav|\y) = (2\pi)^{-1}|\Sigmav|^{-1/2} \exp\bigg\{-\frac{1}{2} \frac{(\theta_1 - y_1)^2}{100} -\frac{1}{2}  (\theta_2 - y_2)^2 \bigg\} p(\thetav)
    \]
    where $\theta_1 \in [-8, 8]$ and $\theta_2 \in [-8, 8]$. In the calibration context, we take $\thetav = (\theta_1, \theta_2)$ and $\model(\thetav) = (\theta_1, \theta_2)$ with $\y = (y_1, y_2) = (0, 0)$ and $\Sigmav$ is a diagonal matrix with diagonal elements $(10^2, 1)$.
    
    \item For 3d function, we consider the unnormalized posterior given by
    \[
        \tilde{p}(\thetav|y) = (2\pi)^{-3/2}|\Sigmav|^{-1/2} \exp\left\{-\frac{1}{2} (\model(\thetav) - \y)^\top\Sigmav^{-1}(\model(\thetav) - \y) \right\} p(\thetav)
    \]
    where $\thetav = (\theta_1, \theta_2, \theta_3) \in [-4, 4]^3$, $\model(\thetav) = (\theta_1^2 + 0.5\theta_1 \theta_2 + 0.5\theta_1 \theta_3, \theta_2^2 + 0.5\theta_2 \theta_1 + 0.5\theta_2 \theta_3, \theta_3^2 + 0.5\theta_3 \theta_1 + 0.5\theta_3 \theta_2)$ with $\y = \textbf{0}_3$ and $\Sigmav$ is a diagonal matrix with diagonal elements each of which is 0.5.
    \item For 6d function, we consider the unnormalized posterior given by
    \[
        \tilde{p}(\thetav|y) = (2\pi)^{-3}|\Sigmav|^{-1/2} \exp\left\{-\frac{1}{2} \left(\model(\thetav) - \y\right)^\top\Sigmav^{-1}(\model(\thetav) - \y) \right\} p(\thetav)
    \]
    where $\thetav = (\theta_1, \theta_2, \ldots, \theta_6) \in [-4, 4]^6$, $\model(\thetav) = (0.5 \theta_1^2 + 0.5\theta_1 \theta_1 + \cdots + 0.5\theta_1 \theta_6, 0.5 \theta_2^2 + 0.5\theta_2 \theta_1 + \cdots + 0.5\theta_2 \theta_6, \ldots, 0.5 \theta_6^2 + 0.5\theta_6 \theta_1 + \cdots + 0.5\theta_6 \theta_6)$ with $\y = \textbf{0}_6$ and $\Sigmav$ is a diagonal matrix with diagonal elements each of which is 0.5.
    \item For 10d function, we consider the unnormalized posterior given by
    \[
        \tilde{p}(\thetav|y) = (2\pi)^{-5}|\Sigmav|^{-1/2} \exp\left\{-\frac{1}{2} \left(\model(\thetav) - \y\right)^\top\Sigmav^{-1}(\model(\thetav) - \y) \right\} p(\thetav)
    \]
    where $\thetav = (\theta_1, \theta_2, \ldots, \theta_{10}) \in [-2, 2]^{10}$, $\model(\thetav) = (0.5 \theta_1^2 + 0.5\theta_1 \theta_1 + \cdots + 0.5\theta_1 \theta_6, 0.5 \theta_2^2 + 0.5\theta_2 \theta_1 + \cdots + 0.5\theta_2 \theta_6, \ldots, 0.5 \theta_{10}^2 + 0.5\theta_{10} \theta_1 + \cdots + 0.5\theta_{10} \theta_{10})$ with $\y = \textbf{0}_{10}$ and $\Sigmav$ is a diagonal matrix with diagonal elements each of which is 0.25.
\end{itemize}
Figure~\ref{synth_figs_acq} illustrates 50 acquired points for a single replication using two-dimensional synthetic functions to gain insights about how the acquisition functions perform.

\subsection{Additional Results with Minimum Energy Design}
\label{sec:app3}

We also include the minimum energy design (a.k.a.~MINED) criterion \citep{Joseph2015, Joseph2019} into the benchmark. To do that, we use the \texttt{R} package MINED \citep{minedR} to acquire 200 parameters, and replicate the results 50 times. We set the iteration number $K_\textrm{iter} = 8$ as illustrated in the two-dimensional example provided in the vignette. Since the acquired parameters are not ordered, we did not include MINED in the results presented in Figure~\ref{synth_figs_compare}, and instead we compare the results with a total of 200 parameters. Figure~\ref{synth_figs_compare_mined} illustrates the results for four toy problems presented in the paper. 
\begin{figure}[h]
    \centering
    \begin{subfigure}{0.45\textwidth}
        \includegraphics[width=1\textwidth]{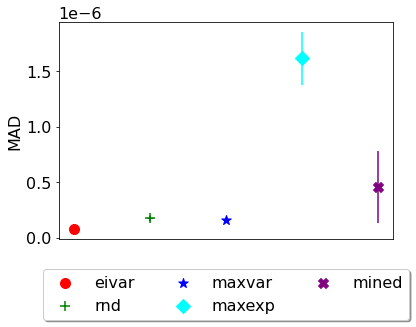}
    \end{subfigure}
    \begin{subfigure}{0.45\textwidth}
        \includegraphics[width=1\textwidth]{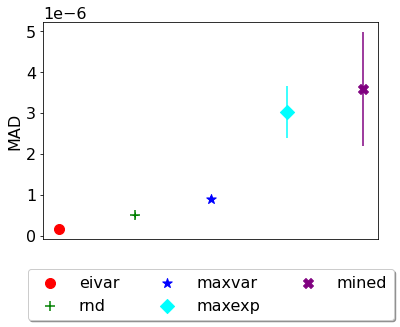}
    \end{subfigure}
    \begin{subfigure}{0.45\textwidth}
        \includegraphics[width=1\textwidth]{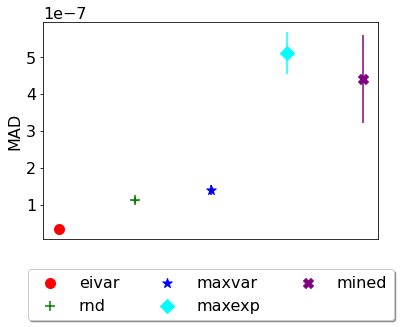}
    \end{subfigure}
    \begin{subfigure}{0.45\textwidth}
        \includegraphics[width=1\textwidth]{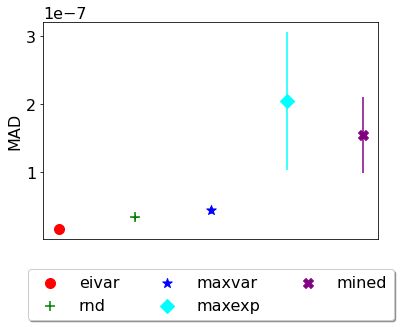}
     \end{subfigure}
    \caption{Comparison of different acquisition functions using the test functions illustrated in Figure~\ref{synth_figs}: banana (upper-left), bimodal (upper-right), unimodal (lower-left), and unidentifiable (lower-right).}
    \label{synth_figs_compare_mined}
\end{figure}

\subsection{Impact of Prior on EIVAR}
\label{sec:priorsensitivity}

\tcb{Similar to the other Bayesian calibration procedures, the prior probability density $p(\thetav)$ is critical in the proposed approach. Figure~\ref{fig:priorsensitivity} illustrates an example with two parameters $\thetav = (\theta_1, \theta_2)$ in $[-2, 5] \times [-2, 5]$ to test the impact of  prior on the proposed approach. In Figure~\ref{fig:priorsensitivity}, for all three cases, the likelihood is the same and the truncated Gaussian prior is used for each parameter with the same mean. We vary the variance of the prior to test the impact of prior on the acquired parameters. In all cases, EIVAR has places points to learn the posterior and majority of the acquired parameters are located onto the high posterior density regions. This example shows that our approach selects the most informative parameters consistently for different strength of prior density.}

\begin{figure}[h]
    \centering
    \begin{subfigure}{0.32\textwidth}
        \includegraphics[width=1\textwidth]{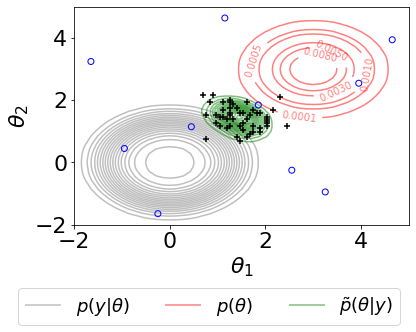}
    \end{subfigure}
    \begin{subfigure}{0.32\textwidth}
        \includegraphics[width=1\textwidth]{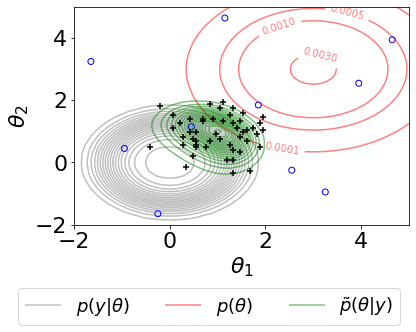}
    \end{subfigure}
    \begin{subfigure}{0.32\textwidth}
    \includegraphics[width=1\textwidth]{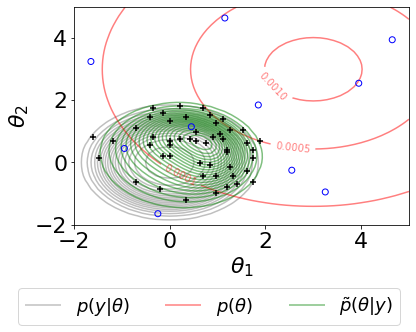}
    \end{subfigure}
    \caption{\tcb{50 samples (plus markers) with EIVAR acquisition function for a test function $\eta(\theta_1, \theta_2) = \theta_1^2 + \theta_2^2$, observation $y = 0$ and observation variance 2. Shown are the likelihood $p(y|\thetav)$ (gray contour plot), the prior $p(\thetav)$ (red contour plot), and the posterior $\tilde{p}(\thetav|y)$ (green contour plot) with varying values of variance for a truncated Gaussian prior: $0.5^2$ (left panel), $1$ (middle panel), and $2^2$ (right panel). Blue circles illustrate the 10 samples used to initiate the proposed procedure.}}
    \label{fig:priorsensitivity}
\end{figure}


\clearpage
\bibliographystyle{JASA}
\bibliography{activelearning}

\end{document}